\newcolumntype{M}[1]{>{\centering\arraybackslash}m{#1}}
\newcolumntype{N}{@{}m{0pt}@{}}
\newtheorem{lemma}{Lemma}
\newtheorem{result}{Result}
 \definecolor{BLACK}{gray}{0}
 \definecolor{WHITE}{gray}{1}
 \definecolor{RED}{rgb}{1,0,0}
 \definecolor{GREEN}{rgb}{0,1,0}
 \definecolor{BLUE}{rgb}{0,0,1}
 \definecolor{CYAN}{cmyk}{1,0,0,0}
 \definecolor{MAGENTA}{cmyk}{0,1,0,0}
 \definecolor{YELLOW}{cmyk}{0,0,1,0}
\newcolumntype{C}[1]{>{\centering\arraybackslash$}p{#1}<{$}}
\begin{document}

\widetext

\title{Low-depth Quantum State Preparation}

\author{Xiao-Ming Zhang}
\affiliation{Department of Physics, City University of Hong Kong, Tat Chee Avenue, Kowloon, Hong Kong SAR, China}

\author{Man-Hong Yung}
\email{yung@sustech.edu.cn}
\affiliation{Department of Physics, Southern University of Science and Technology, Shenzhen, 518055, China}
\affiliation{Shenzhen Institute for Quantum Science and Engineering, Southern University of Science and Technology, Shenzhen, 518055, China}
\affiliation{Guangdong Provincial Key Laboratory of Quantum Science and Engineering, Southern University of Science and Technology, Shenzhen, 518055, China}

\author{Xiao Yuan}
\email{xiaoyuan@pku.edu.cn}
\affiliation{Center on Frontiers of Computing Studies, Department of Computer Science, Peking University, Beijing 100871, China}

\begin{abstract}
A crucial subroutine in quantum computing is to load the classical data of $N$ complex numbers into the amplitude of a superposed $n=\lceil \log_2N\rceil$-qubit state. It has been proven that any algorithm universally implementing this subroutine would need at least $\mathcal O(N)$ constant weight operations. However, the proof assumes that only $n$ qubits are used, whereas the circuit depth could be reduced by extending the space and allowing ancillary qubits. Here we investigate this space-time tradeoff in quantum state preparation with classical data. We propose quantum algorithms with $\mathcal O(n^2)$ circuit depth to encode any $N$ complex numbers using only single-, two-qubit gates and local measurements with ancillary qubits. Different variances of the algorithm are proposed with different space and runtime. In particular, we present a scheme with $\mathcal O(N^2)$ ancillary qubits, $\mathcal O(n^2)$ circuit depth, and $\mathcal O(n^2)$ average runtime, which exponentially improves the conventional bound. 
While the algorithm requires more ancillary qubits, it consists of quantum circuit blocks that only simultaneously act on a constant number of qubits and at most $\mathcal O(n)$ qubits are entangled. 
We also prove a fundamental lower bound $\mit\Omega(n)$ for the minimum circuit depth and runtime with arbitrary number of ancillary qubits, aligning with our scheme with $\mathcal O(n^2)$.
The algorithms are expected to have wide applications in both near-term and universal quantum computing. 
\end{abstract}
\maketitle

Various quantum algorithms have been designed for solving different types of problems\cite{Nielsen.02}. 
A critical subroutine of many quantum algorithms is to encode classical data into a superposed quantum state\cite{Kaye.01,Grover.02,Mottonen.05,Plesch.11,Yung.14,Iten.16,Zhao.19}, which prepares a general muti-qubit state with classically given amplitudes.
An efficient state preparation scheme is the prerequisite of many algorithms, including quantum linear system algorithms\cite{Harrow.09,Wossnig.18}, quantum versions of data fitting\cite{Wiebe.12}, principal component analysis\cite{Lloyd.14}, support vector machine\cite{Rebentrost.14}, Hamiltonian simulation algorithms\cite{Lloyd96,Aharonov.03,Low.17}, quantum machine learning\cite{Biamonte.17,Wan.17,Beer.20,Romero.17,Bondarenko.20,Wang.21,Zhang.21}, etc. 
Theoretically, the minimal number of constant-weight operations to prepare an arbitrary $N$-dimensional or $n=\lceil \log_2N\rceil$ qubit state is lower bounded by $\mit{\Omega}(N/\log n)$\cite{Nielsen.02,note_Nielsen}, which corresponds to the circuit depth of $\mit{\Omega}(N/(n\log n))$. 
For instance, one may construct a unitary to transform $|0\rangle^{\otimes n}$ to the target state with only single-qubit and CNOT gates, and existing algorithms\cite{Mottonen.05,Plesch.11,Zhao.19,Iten.16} require $\mathcal O(N)$ circuit depth, which is close to the fundamental limit. Since the complexity is linear in the dimension $N$ or exponential in the number of qubits $n$, it requires a deep circuit for large $N$. For example, the circuit depth is already challenging for the current technology when $N\approx 10^3$ or $n=10$.

However, the proof of the lower bound considers operations on exactly $n$ qubits, and one may trade the circuit depth (time) with ancillary qubits (space). Along this line, quantum circuits with $\mathcal O(n^2)$ depth have been proposed to encode binary vectors\cite{Cortese.18} and general non-binary vectors\cite{Araujo.20,Johri.20} into special types of entangled states. The key idea is to apply operations on $N$ qubits in parallel so that the circuit depth is polylogarithmic in $N$. In Ref~\cite{Araujo.20}, the method has also been applied for improving quantum machine learning algorithms.
Nevertheless, the output quantum state is encoded with $N$ qubits, which is exponentially larger than $n$ and is in a complicated entangled basis of all the $N$ qubits, which may not be universally usable as the input to other quantum algorithms.
While there are other methods with logarithmic costs, including controlled-rotation-based\cite{Grover.02,Prakash.14}, Grover-oracle-based\cite{Soklakov.06} and quantum random-access-memory-based\cite{Casares.20,Lloyd.13} methods, they require global unitaries or global oracles acting on all qubits, which is challenging based on current technologies.
Therefore, it remains an open question whether it is possible to more efficiently and directly prepare a general $N$-dimensional ($n$-qubit) quantum state with constant-weight operations and a shallow circuit depth.

In this work, we address this problem by introducing probabilistic quantum state preparation algorithms. We consider the task of preparing an $N$-dimensional ($n$-qubit) state and introduce several quantum algorithms that use circuits with polylogarithmic depth $\mathcal O(n^2)$. With different numbers of ancillary qubits, the algorithms have different success probabilities, which could be enhanced to $\mathcal O(1)$ with a runtime that is inverse proportional to the success probability.
As a result, the sequential algorithm uses $\mathcal O(n)$ ancillary qubits with average runtime $\mathcal O(N^2)$ and the parallel algorithms uses more ancillary qubits with a smaller average runtime. Specifically, the extreme parallel algorithm has an average runtime of $\mathcal O(n^2)$ with $\mathcal O(N^2)$ ancillary qubits. {Note that for all the proposed algorithms, one only needs to maintain entanglement of at most $\mathcal O(n)$ qubits.} Our results thus show the space-time trade-off in quantum state preparation. 
{Moreover, we have shown that fundamentally, the circuit depth and runtime is lower bounded by $\mit\Omega(n)$ even with arbitrarily large amount of ancillary qubits, which is comparable to our result of $\mathcal O(n^2)$.  }
\\

\noindent \emph{\textbf{Framework.}}
We first introduce the task of quantum state preparation. Given a vector $\bm{u}:=[u_0,u_1,\cdots,u_{N-1}]\in\mathbb{C}^N$
 of $N$ complex numbers satisfying $\|\bm{u}\|_2=1$, we  consider the preparation of the $n$-qubit state
 \begin{align}
|\psi(\bm{u})\rangle:=\sum_{i=0}^{N-1}u_i|n,i\rangle, \label{eq:amp}
\end{align}
where $|n,i\rangle$ is the $n$-qubit binary representation of $i$. For example, $|3,7\rangle=|111\rangle$, $|3,6\rangle=|110\rangle$ and $|4,7\rangle=|0111\rangle$. Here, the state $|\psi(\bm{u})\rangle$ is also called the \textit{amplitude encoding}\cite{Lloyd.13,Lloyd.18,Dallaire.18,Schuld.19,Schuld.20,Lu.20}  of the vector $\bm{u}$ and  serves as our target state. The problem we consider is as follow:

$\\$
\textit{
Given an arbitrary quantum state described in Eq.~\eqref{eq:amp}, find preparations method with constant-weight operations and polylogarithmic circuit depth.
}
$\\$

To prepare $|\psi(\bm{u})\rangle$, we consider a resized vector $\bm{v}:=\bm{u}/\max(|u_i|)$ and define the \textit{label encoding state} of $\bm{v}$ with $n+1$ qubits  (in below,  quantum states may be represented up to a normalization factor)
\begin{equation}\label{eq:lab}
|\bm{v}\rangle:=\sum_{i=0}^{N-1}|n,i\rangle|v_i\rangle
=\sum_{i=0}^{N-1}|n,i\rangle \left[v_i|0\rangle+(1-v_i)|1\rangle\right], 
\end{equation}
where $|n,i\rangle$ and $|v_i\rangle= v_i|0\rangle+(1-v_i)|1\rangle$ represents the $n$-qubit label and  the value single qubit, respectively. Note that if we project the value qubit to $|0\rangle$ and trace it out, we can probabilistically obtain the target state $|\phi(\bm{u})\rangle$. Thus we first focus on the preparation of the label encoding state. \\

\noindent \emph{\textbf{Positive label state preparation.}}
We first consider the special case with positive amplitudes $\bm{v}\in[0,1]^N$. Our algorithm is based on the following result about concatenating two label encoding states. 

\begin{figure} \centering
\includegraphics[width=\columnwidth]{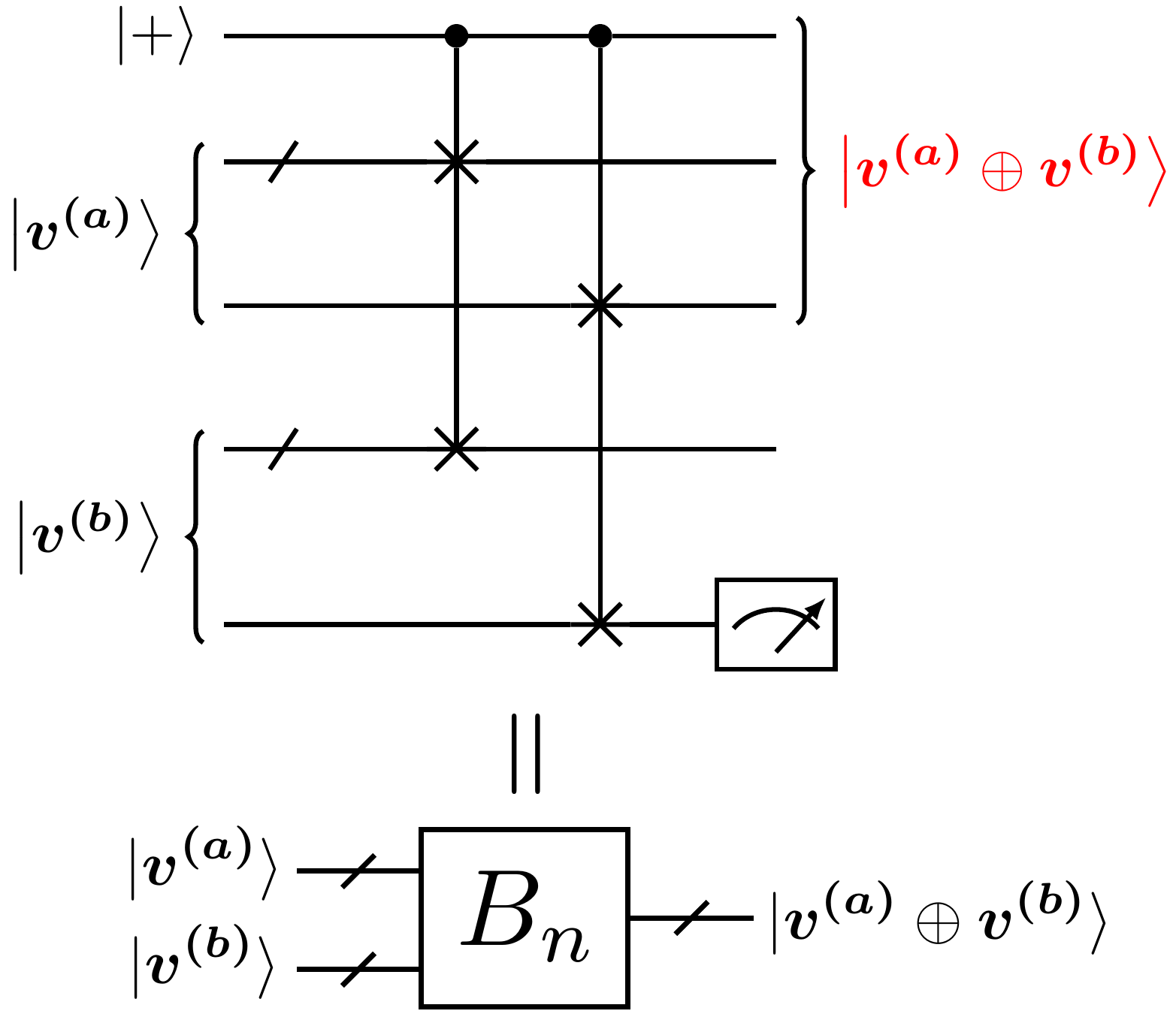}
\caption{Low-depth concatenation circuit for preparing the label encoding state $\big|\bm{v^{(a)}}\oplus\bm{v^{(b)}}\big\rangle$ with input  $|\bm{v^{(a)}}\rangle\otimes|\bm{v^{(b)}}\rangle$ and ancillary qubit $|+\rangle$. Meter represents projecting qubits to state $|+\rangle$, and red colour represents the target output state. See \cite{sm} for details.
}
\label{fig:1}
\end{figure}

\begin{result}\label{th:1}
Given two quantum states $|\bm{v^{(a)}}\rangle$ and $|\bm{v^{(b)}}\rangle$ with $\bm{v^{(a)}}, \bm{v^{(b)}}\in [0,1]^{N}$, there exists an $\mathcal O(n)$ depth concatenation circuit, such that the state $|\bm{v^{(a)}}\oplus\bm{v^{(b)}}\rangle$ can be obtained with probability larger than $1/2$.
\end{result}

\noindent We have defined $\bm{v^{(a)}}\oplus \bm{v^{(b)}}=\left[v^{(a)}_0,\cdots,v^{(a)}_{N-1},v^{(b)}_0,\cdots,v^{(b)}_{N-1}\right]$. The \textit{concatenation circuit} is shown in Fig.~\ref{fig:1}, where we perform a joint controlled-swap operation on each pair of the qubits for $|\bm{v^{(a)}}\rangle$ and $|\bm{v^{(b)}}\rangle$
with a control ancillary qubit initialized in $|+\rangle=1/\sqrt{2}(|0\rangle+|1\rangle)$.
The state then becomes
\begin{equation}
\frac{1}{\sqrt{2}}\left(|0\rangle|\bm{v^{(a)}}\rangle\otimes|\bm{v^{(b)}}\rangle+|1\rangle|\bm{v^{(b)}}\rangle\otimes|\bm{v^{(a)}}\rangle\right),\label{eq:sw}
\end{equation}
where $\otimes$ is the Kronecker product. Next, the key step is to disentangle the last $(n+1)$ qubits by projecting the last value qubit to $|+\rangle$. The success probability of the projection satisfies $p_+\geqslant1/2$ 
(see \cite{sm}). After the projection, the last $n+1$ qubits are disentangled with the rest $n+2$ qubits, and the full quantum state is given by 
\begin{align}
\left(|0\rangle|\bm{v^{(a)}}\rangle+|1\rangle|\bm{v^{(b)}}\rangle\right)\otimes|\bm{v^{\text{uni}}}\rangle\notag = |\bm{v^{(a)}}\oplus \bm{v^{(b)}}\rangle\otimes|\bm{v^{\text{uni}}}\rangle,
\end{align}
where we have defined $\bm{v^{\text{uni}}}=[1/2,1/2,\cdots,1/2]$. Note that $|\bm{v^{(a)}}\oplus \bm{v^{(b)}}\rangle=\sum_{i=0}^{2N-1}|n+1,i\rangle \left[v_i|0\rangle+(1-v_i)|1\rangle\right]$, where $v_i=v_i^{(a)}$ for $i<N$ and $v_i=v_{i-N}^{(a)}$ for $i\geqslant N$. By tracing out $|\bm{v^{\text{uni}}}\rangle$, the label encoding state of the concatenated vectors $|\bm{v^{(a)}}\oplus \bm{v^{(b)}}\rangle$ is obtained. Because $|\bm{v^{(a)}}\rangle$ and $|\bm{v^{(b)}}\rangle$ are $(n+1)$ qubit states, and each control swap gate can be realized with a constant numbers of single and two qubit gates, the concatenation circuit has $\mathcal O(n)$ circuit depth. 

The sequential scheme works by applying the concatenation circuit with a divide-and-conquer strategy (see also Algorithms.~\ref{alg:seq}). As shown in 
Fig.~\ref{fig:2}(a), we sequentially prepare the $(i+1)$-qubit label encoding state via the concatenation  circuit. 
All ancillary qubits are disentangled after the measurement and can be reused, so only $\mathcal O(i)$ ancillary qubits are required at each step. The sequential scheme thus requires $\mathcal O(i)$ ancillary qubits and $\mathcal O(n)$ circuit depth. Meanwhile, we need to repeat the concatenation circuit several times to deterministically obtain the output state. We denote the average runtime for preparing $(i+1)$ qubits positive label encoding states as $T_{\text{pos}}(i)$. We also assume that each layer of quantum gates takes constant operation time. Because the success probability of each concatenation circuit is larger than $1/2$, we have $T_{\text{pos}}(i)\leqslant2[2T_{\text{pos}}(i-1)+k_1i+k_0]$. Here, $k_1$ characterize the runtime for single control swap gate, and $k_0$ characterize the runtime for processes that are independent on $i$, such as detection time and latent time. We show that $T_{\text{pos}}(n)\le \mathcal O(N^2)$.

\begin{algorithm} [H]
\caption{: $f_{\text{seq}}(\bm{x})$  }  
\label{alg:seq}  
\begin{algorithmic}[1]

\STATE \textbf{If} $\bm{x}$ is two-dimensional: 

\STATE \quad prepare $|\bm{x}\rangle$  with the unitary preparation method

\STATE \quad \textbf{Output} $|\bm{x}\rangle$

\STATE \textbf{Else}:

\STATE \quad let $\bm{x^{(a)}}\oplus \bm{x^{(b)}}= \bm{x}$ 

\STATE \quad prepare $|\bm{x^{(a)}}\rangle$ with $f_{\text{seq}}(\bm{x^{(a)}})$ 

\STATE \quad prepare $|\bm{x^{(b)}}\rangle$ with $f_{\text{seq}}(\bm{x^{(b)}})$ 

\STATE \quad perform transformation $|\bm{x^{(a)}}\rangle\otimes|\bm{x^{(b)}}\rangle\rightarrow|\bm{x}\rangle$ (with \\\quad success probability larger that $1/2$)

\STATE \quad \textbf{If} the transformation in line 8 fails:

\STATE \quad\quad go to line  6

\STATE \quad \textbf{Else}: 

\STATE \quad\quad \textbf{Output} $|\bm{x}\rangle$

\end{algorithmic} 
\end{algorithm} 
\begin{algorithm}[H]
\caption{: $f_{\text{para}}(\bm{x},c_0)$ }  
\label{alg:para}  
\begin{algorithmic}[1]

\STATE \textbf{If} $\bm{x}$ is two-dimensional:

\STATE \quad prepare $c_0$ copies of $|\bm{x}\rangle$ with the unitary preparation \\\quad method in parallel

\STATE \quad \textbf{Output} $|\bm{x}\rangle^{\otimes c_0}$

\STATE \textbf{Else}: 

\STATE \quad let $\bm{x^{(a)}}\oplus \bm{x^{(b)}}= \bm{x}$ 

\STATE \quad query $f_{\text{para}}(\bm{x^{(a)}},c_0)$ and $f_{\text{para}}(\bm{x^{(b)}},c_0)$ in parallel, get \\\quad return $|\bm{x^{(a)}}\rangle^{\otimes c_a}$  and $|\bm{x^{(b)}}\rangle^{\otimes c_b}$  

\STATE \quad define $c_{\min}=\min\{c_a,c_b\}$

\STATE \quad perform transformation $|\bm{x^{(a)}}\rangle\otimes|\bm{x^{(b)}}\rangle\rightarrow|\bm{x}\rangle$ for $c_{\min}$ \\\quad times \textit{in parallel}, with $c$ trials success

\STATE \quad\textbf{If} $c=0$:

\STATE \quad\quad go to line 6

\STATE \quad\textbf{Else}:

\STATE \quad\quad \textbf{Output} $|\bm{x}\rangle^{\otimes c}$
\end{algorithmic} 
\end{algorithm} 

The runtime can be improved by parallelization. Firstly, two input states of the concatenation circuit (see Fig.~\ref{fig:1}), can be prepared in parallel. Secondly, one can prepare sufficient copies of the input state, and then perform the concatenation circuit for multiple pairs of input states in parallel. In this way, the success probability of the projection (with at least one successful transformation) will be much higher, and the total runtime could be reduced dramatically. 
In Fig.~\ref{fig:2}(c), we show the \textit{parallel concatenation circuit} $B'_i$ for preparing $(i+1)$-qubit label encoding states, which  receives state $|\bm{v^{(a)}}\rangle^{\otimes c_a}\otimes|\bm{v^{(b)}}\rangle^{\otimes c_b}$. One performs $c_{\min}=\min\{c_a,c_b\}$ times of concatenation circuit in parallel, with totally $c'$ successful trials and obtain the output state $|\bm{v^{(a)}}\oplus\bm{v^{(b)}}\rangle^{\otimes c'}$. Note that $c'$ follows Binomial distribution $c'\sim \text{B}(c_{\min},p_+)$ with $p_+>1/2$.

\begin{figure} [t]\centering
\includegraphics[width=\columnwidth]{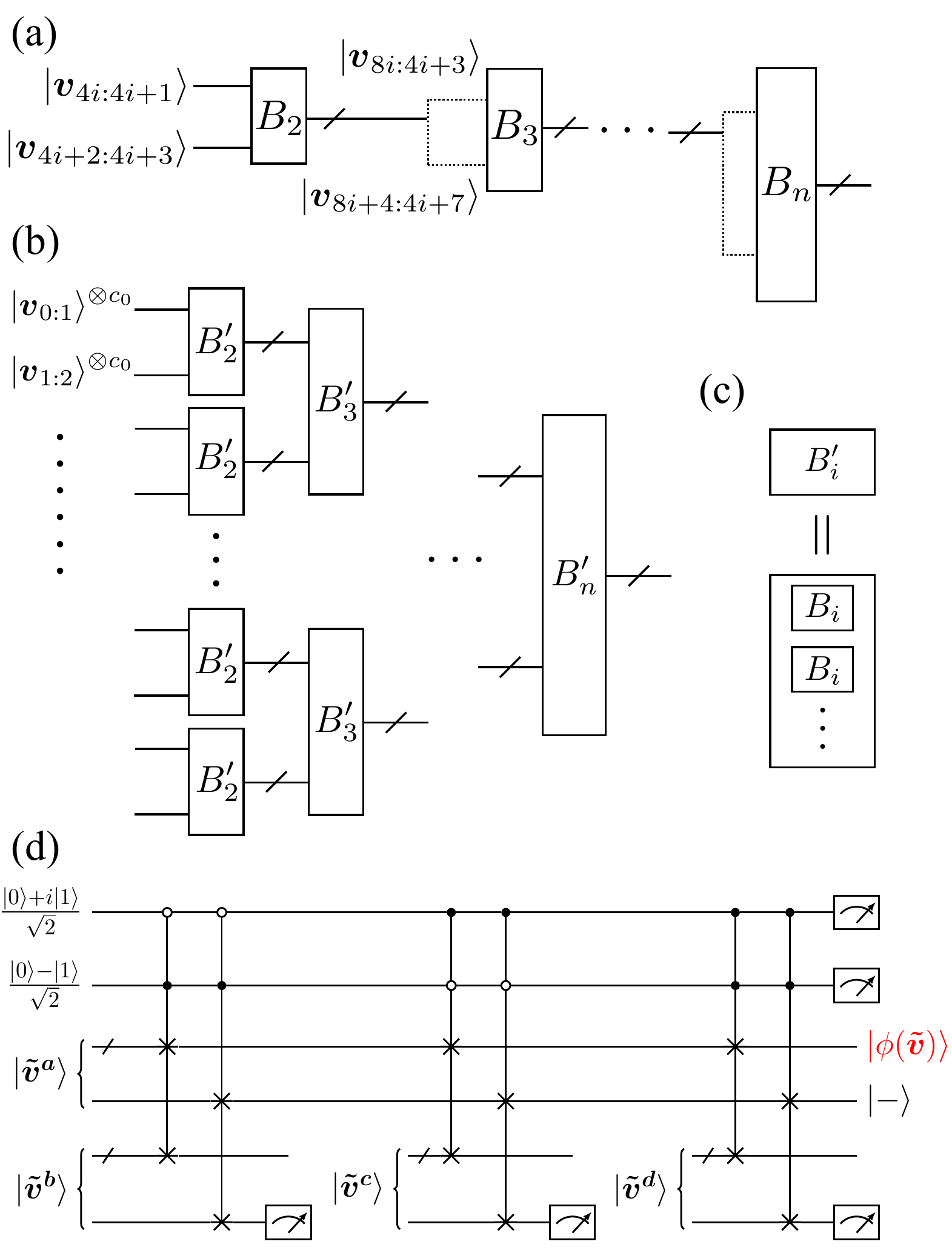}
\caption{(a)-(b) Low-depth quantum circuit for preparing positive label encoding state for (a) sequential parallel preparation and (b) parallel preparation. (c) A parallel concatenation circuit $B_i'$ contains multiple (non-parallel) concatenation circuits $B_i$. (d) Circuit for preparing arbitrary states $|\phi(\bm{\tilde{v}})\rangle$ with four positive label encoding states. Red colour represents the target output state. }
\label{fig:2}
\end{figure}

The parallel preparation of an $(n+1)$-qubit label encoding state is shown in Fig.~\ref{fig:2}(b), where we have denoted $\bm{v}_{i:j}=[v_i,v_{i+1},\dots,v_{j}]$. We prepare $c_0$ copies of the low dimensional label encoding states, i.e. $|\bm{v}_{0:1}\rangle^{\otimes c_0}, |\bm{v}_{1:2}\rangle^{\otimes c_0}, \dots$, which are  concatenated recursively to state $|\bm{v}\rangle$. Whenever the parallel concatenation circuit has zero copies of output, we repeat the preparation of the input state of the corresponding block (see also Algorithm.~\ref{alg:para}). The average runtime and space complexity depends on $c_0$. For example, when $c_0=\left\lceil N+N^{3/4}\right\rceil$, the ancillary qubit number scales as $\mathcal O(N^{2})$, while $T_{\text{pos}}(n)$ scales polylogarithmic as $\mathcal O(n^2)$ (see Sec.~II of \cite{sm}). When $c_0=1$, we need $\mathcal O(N)$ ancillary qubits, and we numerically find that $T_{\text{pos}}(n)=\mathcal O(N^{1.52})$(Sec.~II of \cite{sm}) in the worst case, i.e. $p_+=0.5$. We note that although more ancillary qubits are needed in the parallel schemes, only the entanglement among at most $(n+1)$ qubtis is required, and all ancillary qubits can be reused after the preparation.

With the label encoding state, the amplitude encoding state $|\phi(\bm{v})\rangle$ can be obtained by projecting the value qubit to $|0\rangle$ with probability $p_s$. In this way, an arbitrary quantum state [Eq.~\eqref{eq:amp}] with real amplitudes $u_i$ can be prepared with low circuit depth. We summarize the result as follows. 
\begin{result}\label{Result:positive}
With probability $p_s$, an arbitrary $n$-qubit quantum state with real amplitudes can be prepared via the sequential and parallel algorithms with $\mathcal O(n^2)$ depth of single-qubit gates, two-qubit gates and local measurements.
\end{result}

\noindent We note that the runtime $T$ of deterministically preparing the state is $\mathcal O(T_{\rm pos}/p_s)$, and we will shortly discuss how to bound it in the general case. 

The sequential and parallel algorithms also work for the general case with complex amplitudes. However, the success probability of the concatenation circuit with two arbitrary vectors can only be lower bounded by $p_+\geqslant1/10$, instead of $p_+\geqslant1/2$ for positive amplitudes case.   Therefore the runtime or the number of qubits will be significantly increased~\cite{sm}.\\  

The pseudo code of sequential and parallel preparation methods for preparing a positive quantum label encoding states are provided in \cite{sm}. 

\noindent \emph{\textbf{Arbitrary state preparation.}}
Now we propose an alternative strategy to prepare the label encoding state with arbitrary complex amplitudes. 
We rewrite $\bm{{v}}$ as the combination of four positive vectors, $\bm{{v}}=\bm{ v^a}-\bm{ v^b}+i\bm{ v^c}-i\bm{ v^d}$, whose elements are defined as ${v}^{a}_i=\max\left(\text{Re}({v}_i),0\right), {v}^{b}_i=\max\left(-\text{Re}({v}_i),0\right), {v}^{c}_i=\max\left(\text{Im}({v}_i),0\right),$ and ${v}^{d}_i=\max\left(-\text{Im}({v}_i),0\right)$, respectively. First, the four positive label encoding states $|\bm{ v^a}\rangle$, $|\bm{ v^b}\rangle$, $|\bm{{v}^c}\rangle$ and $|\bm{ v^d}\rangle$ could be prepared with the above scheme. Then, we introduce two ancillary qubits prepared in states $(|0\rangle+i|1\rangle)/\sqrt{2}$ and $(|0\rangle-|1\rangle)/\sqrt{2}$. The entire system is described by 
\begin{equation}
(|00\rangle-|01\rangle+i|10\rangle-i|11\rangle)\otimes|\bm{{v}^{abcd}}\rangle,\label{eq:ini}
\end{equation}
where we have used the abbreviation $|\bm{{v}^{abcd}}\rangle\equiv|\bm{{v}^{a}}\rangle\otimes|\bm{{v}^{b}}\rangle\otimes|\bm{{v}^{c}}\rangle\otimes|\bm{{v}^{d}}\rangle$.
To obtain $\ket{\bm v}$, we perform three sets of controlled-controlled-swap gates, which swap $|\bm{{v}^{a}}\rangle$ and one of the states among $|\bm{{v}^{b}}\rangle$, $|\bm{{v}^{c}}\rangle$ and $|\bm{{v}^{d}}\rangle$, with two ancillary qubits as control qubits. The corresponding quantum circuit is shown in Fig.~\ref{fig:2}~(d), where the hollow nodes represents controlled on $|0\rangle$, and the solid nodes represents controlled on $|1\rangle$.
 The state then becomes
\begin{align}
|00\rangle|\bm{{v}^{abcd}}\rangle-|01\rangle|\bm{{v}^{bacd}}\rangle+i|10\rangle|\bm{{v}^{cbad}}\rangle-i|11\rangle|\bm{{v}^{dbca}}\rangle.\label{eq:cc}
\end{align}
The above operations require totally $3(n+1)$ control-control-swap gates, where each of which can be decomposed to constant numbers of single- and two-qubit gates, so the corresponding circuit depth is $\mathcal O(n)$.

In the next step, we project two ancillary qubits and the label qubits of the last three label encoding states to $|+\rangle$. If the projection succeeds (the success probability {$p'_s$} will be discussed later), the quantum state becomes
\begin{align}
|+\rangle^{\otimes2}(|\bm{{v}^{a}}\rangle-|\bm{{v}^{b}}\rangle+i|\bm{{v}^{c}}\rangle-i|\bm{{v}^{d}}\rangle)\otimes|\bm{v^{\text{uni}}}\rangle^{\otimes3}. \label{eq:abcd}
\end{align}
Since $|\bm{{v}^{a}}\rangle-|\bm{{v}^{b}}\rangle+i|\bm{{v}^{c}}\rangle-i|\bm{{v}^{d}}\rangle=\sqrt{2}|\psi(\bm{{v}})\rangle\otimes|-\rangle$, we can  trace out $|+\rangle^{\otimes2}$ and $|-\rangle\otimes|\bm{v^{\text{uni}}}\rangle^{\otimes3}$ to have the target state $|\psi(\bm{{v}})\rangle$. 
Together  with  Result~\ref{Result:positive}, we have the following result.
\begin{result}
With probability $p_s'$, an arbitrary $n$-qubit quantum state can be prepared via the sequential and parallel algorithms with $\mathcal O(n^2)$ depth of single-qubit gates, two-qubit gates and local measurements.
\end{result}
\noindent We note that the average runtime is proportional to the $T_{\text{pos}}$ for preparing each $|\bm{ v^a}\rangle$, $|\bm{ v^b}\rangle$, $|\bm{{v}^c}\rangle$ or $|\bm{ v^d}\rangle$, divided by the projection success probability $p_s'$, i.e., $\mathcal O(T_{\text{pos}}/p_s')$. Next we show how to estimate the success probabilities.  \\

\noindent \emph{\textbf{Projection success probability.}}
To exactly prepare the amplitude encoding state $\ket{\psi(\bm u)}$, the projection probabilities $p_s$ (for positive data) and $p_s'$ (for complex data) are both lower bounded by $\mit \Omega(\sum_i |u_i|^2/\max(|u_i|^2)N)$.  The worst case lower bound is $\mit \Omega(1/N)$ and it could be tightened with a detailed analysis. Denote $u_i = |a_i|/\sqrt{\sum_i |a_i|^2}$ for positive data or $u_i= a_i/\sqrt{\sum_i |a_i|^2}$ for general complex data, we consider that the classical data $\bm u$ is randomly generated in two ways. 
\begin{enumerate}
	\item For the first way, we let $a_i = b_ie^{i\phi_i}$, and uniformly generate each $b_i$ from $[-1,1]$, and $\phi_i$ from $[0,\pi]$.
	{\item For the second way, we let $a_i = a^\text{(r)}_i+ia^\text{(m)}_i$, and 
generate each $a^\text{(r)}_i,a^\text{(m)}_i$ according to the standard normal distribution $\mathcal N(0,1)$.}
\end{enumerate}
\noindent The first way corresponds to the case where the classical data, i.e., each $u_i$, is uniformly random; And the second way corresponds to the case where the state vector $\ket{\psi(\bm u)}$ is uniformly random in the Hilbert space\cite{Watrous.18}. Then we can lower bound the projection probability from the Chernoff bound as follows (see Sec.III A of \cite{sm}).
\begin{result}\label{result:perfect}
With failure probability $\delta \in (0,1)$, the projection probabilities are lower bounded by
\begin{equation}
\begin{aligned}
	\textrm{case 1}:\quad	&p_s, p_s'\ge \mit\Omega\left(\delta^{1/N}\right),\\
	\textrm{case 2}:\quad	&p_s, p_s' \ge \mit\Omega\left(\delta^{1/N}/\log (N/\delta)\right).
\end{aligned}
\end{equation}
\end{result}
\noindent Therefore, the projection probabilities could be lower bounded by a constant for case 1 and by $1/n$ for case 2 given sufficiently large $N$. 

We can further improve the projection probabilities for case 2 by allowing approximate   state preparation. We introduce a cut-off value $u_{\text{cut}}$ and define $\tilde{v}_i\equiv\arg(u_i)\min(|u_i|/u_\text{cut},1)$. After preparing $|\phi(\bm{\tilde{v}})\rangle$ with $\bm{\tilde{v}}=[\tilde{v}_0,\tilde{v}_1,\cdots,\tilde{v}_{N-1}]$, we can achieve the preparation fidelity $F\equiv|\langle\phi(\bm{\tilde{v}})|\phi(\bm{u})\rangle|^2\geqslant1-\varepsilon_{\text{th}}$ by appropriately choosing the cutoff $u_{\text{cut}}$. Note that a perfect preparation $F=1$ corresponds to $u_{\text{cut}}=\max(|u_i|)$.
In the worst cases, we have $p_s\geqslant\text{mean}(|\tilde{v}_i|^2) $ and $p'_s\geqslant \text{mean}(|\tilde{v}_i|^2)/64$ (see Sec.III B of \cite{sm}). 
These values decrease with $u_{\text{cut}}$, whereas the preparation fidelity $F$ increases with $u_{\text{cut}}$, indicating a trade-off between projection probability and fidelity (Sec.III B of \cite{sm}). By setting $u_\text{cut}$ appropriately, $p_s$ ($p_s'$) has logarithmic relation with $\varepsilon_{\text{th}}$ and $\delta$.
We summarize our results as follows.
\begin{result}\label{result:cut}
With threshold infidelity $\varepsilon_{th}\in(0,1)$ and failure probability $\delta\in(0,1)$, the projection probabilities of obtaining the final quantum states with fidelities $F\geqslant1-\varepsilon_{th}$ are lower bounded by
\begin{equation}
\begin{aligned}
	&\textrm{case 2}:\quad	p_s, p_s' \ge \mit\Omega\left(\frac{\delta^{2/N}}{\log\delta^{-1}\log(\delta^{-1}\varepsilon_{\text{th}}^{-1})}\right).\label{eq:cut}
\end{aligned}
\end{equation}
\end{result}
Recall that the total runtime $T$ scales as $\mathcal O(T_{\text{pos}}/p_s)$ or $\mathcal O(T_{\text{pos}}/p_s')$. Combining the above results, we summarize the circuit depth, runtime, number of qubits in Table~\ref{tab:compare}. \\

 \begin{figure} [!t]\centering
\includegraphics[width=0.9\columnwidth]{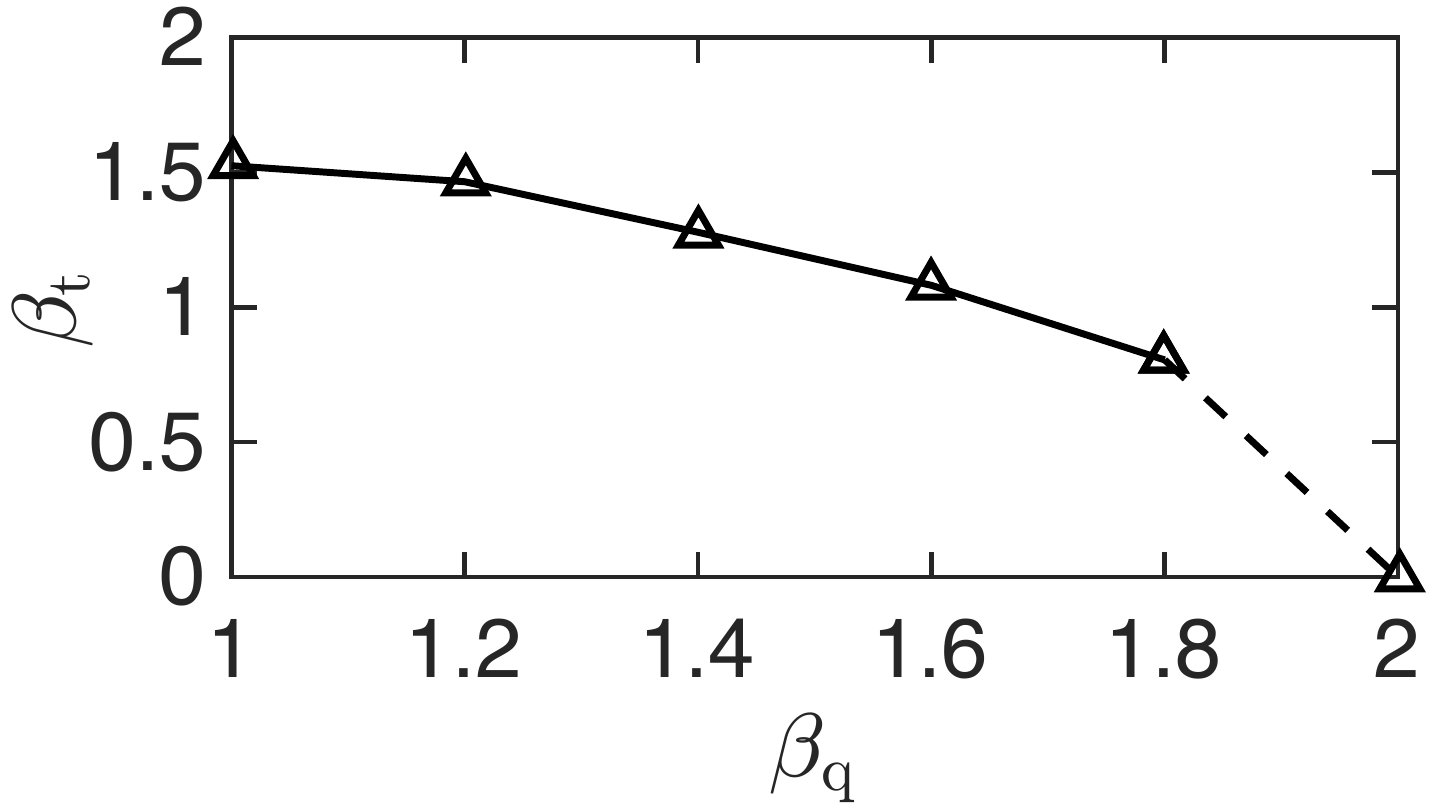}
\caption{{Space-time tradeoff in state preparation. The relation between time $\left\lceil N^{\beta_t}\right\rceil$ with exponent $\beta_t$ and space $\mathcal O(N^{\beta_q})$ with exponent $\beta_q$ for different parallel preparation schemes. When $\beta_q=2$, $T$ no longer follow polynomial scaling.  \label{fig:tf}}
}
\end{figure}

\noindent \emph{\textbf{Space-time trade-off.}}
We note that all the three listed algorithms have circuit depth $\mathcal O(n^2)$ with different runtime and number of qubits. We can see that the runtime decreases with more qubits, indicating a space-time tradeoff in quantum state preparation. { Furthermore, we change $c_0$ for parallel preparation such that there are totally $\left\lceil N^{\beta_t}\right\rceil$ qubits with $1\leqslant\beta_q<2$, and assume the average runtime is in the form of $T=O(N^{\beta_q})$. We numerically estimate the exponents $\beta_t$ for different $\beta_q$. As shown in Fig.~\ref{fig:tf}, the exponents $\beta_t$ decreases rapidly with larger $\beta_q$. Note that when $\beta_q$, $T$ does not follow the polynomial scaling, which is consistent with our analytical estimation.}

Moreover,  we proof a fundamental lower bound of both the circuit depth and runtime for quantum state preparation (Sec.~V of \cite{sm}). If only constant-weight operations are allowed and $\bm{u}$ is stored classically, we have the following result.
\begin{result}\label{th:bnd}
Preparing an arbitrary quantum state from classical amplitudes $\bm{u}$ requires at least circuit depth and runtime $\mit\Omega(n)$.
\end{result}
\noindent The basic idea is that when $\bm{u}$ is stored classically, one requires at least $2^n$ bits to store it. In order to ``compress'' the information spread over $2^n$ bits, at least $n$ layers of the local operations circuit are required. This lower bound is independent on the number of ancillary qubits, and measurement and post-selections are also taken into consideration. Our work gives an explicit construction with circuit depth and runtime $\mathcal O(n^2)$, which is comparable to the lower bound. \\

\noindent \emph{\textbf{Discussion.}}
 We have demonstrated several protocols to prepare an arbitrary $N$-dimensional quantum state with $\mathcal O((\log_2(N))^2)$ circuit depth, different number of ancillary qubits, and different runtime. A comparison of our methods to existed methods has been summarized in Table~\ref{tab:compare}. We also discuss the space-time trade-off of the parallel preparation. In Sec.IV of Ref~\cite{sm}, we further discuss the tradeoff between our state preparation methods and existing ones.
 
Besides the low-depth nature, there are other advantages of our work. First, our methods only require to simultaneously maintain entangled states of at most $\mathcal O(n)$ qubits and the rest ancillary qubits are prepared in a separate state. Second,
our methods have a much weaker requirement on the circuit programmability, since most parts of the circuit are fixed except for the first few layers. Moreover, our methods do not require heavy classical computation to compile the circuit, which typically takes time of $\mathcal O(N)$ for unitary state preparation.

\begin{table}[t]\centering
\caption{\label{tab:compare} Comparison of  different state preparation methods. Depth~-~circuit depth; Runtime - circuit runtime $\times$ repetitions; Qubits - total number of qubits; Parallel-1 and -2 corresponds to parallel preparation with $c_0=\left\lceil N+N^{3/4}\right\rceil$ and $c_0=1$. The average runtime for parallel-2 method is estimated with numerical simulation.}
\begin{ruledtabular}
\begin{tabular}{cccccc}

&Depth &Runtime&Qubits\\ 
\hline
Unitary\cite{Mottonen.05,Plesch.11}    & $\mathcal O\big(N\big)$         &$\mathcal O\big(N\big)$     &$\mathcal O\big(n\big)$  \\
Sequential                               & $\mathcal O\big(n^2\big)$ &$\mathcal O\big(N^2\big)$ &$\mathcal O\big(n\big)$ \\
Parallel-1
&$\mathcal O\big(n^2\big)$   &$\mathcal O\big(n^2\big)$     &$\mathcal O\big(N^{2}\big)$\\
Parallel-2                               &$\mathcal O\big(n^2\big)$   &$\mathcal O\big(N^{1.52}\big)$     &$\mathcal O\big(N\big)$ \\

\end{tabular}
\end{ruledtabular}
\end{table}

There are several open questions to be addressed. First, the preparation time could be much longer in the worse case. For example, if $\bm{v}$ is a sparse vector with only a constant number of nonzero elements and bounded values, the success rate $p'_s$ of the projection in Eq.~\eqref{eq:cc} and Eq.~\eqref{eq:abcd} decreases linearly with $N$ and the total runtime will be $N$ times larger. When having a too small success rate, an interesting future work is to design alternative state preparation methods that exploit the sparsity and the structure of the amplitudes.  
Second, it is currently unclear if our methods are optimal in terms of either circuit depth or runtime. {While a lower bound has been derived in Results~\ref{th:bnd}, closing the gap between $\mit\Omega(n)$ and $\mathcal O(n^2)$ would be compelling for both theoretical and practical purposes.} Finally, it is interesting to investigate applications of our methods in existing quantum algorithms and study their performance with noisy intermediate-scaled quantum hardware\cite{Preskill.18,Arute.19,Google.20,Gong.21,cerezo2020variational,endo2021hybrid}. The quantum advantage\cite{Google.20,Zhong.20,Yung.19,Wu.20,Chen.21} could be expected with a robust, efficient and general quantum state preparation protocol. 

\noindent\textbf{Acknowledgements} We thank Tianyang Tao for helpful discussion. This work is supported by the Open Project of Shenzhen Institute of Quantum Science and Engineering (Grant No.~SIQSE202008), Natural Science Foundation of Guangdong Province(Grant No.2017B030308003), the Key R$\&$D Program of Guangdong province (Grant No. 2018B030326001), the Science,Technology and Innovation Commission of Shenzhen Municipality (Grant No.JCYJ20170412152620376 and No.JCYJ20170817105046702 and  $\\$
No.KYTDPT20181011104202253), National Natural Science Foundation of China (Grant No.11875160 and No.U1801661), the Economy, Trade and Information Commission of Shenzhen Municipality (Grant No.201901161512), and Guangdong Provincial Key Laboratory (Grant No.2019B121203002).
%

\vspace{1cm}
\newpage
\onecolumngrid

\begin{center}
{\bf\large Supplemental Material}
\end{center}
\vspace{0.5cm}

\setcounter{secnumdepth}{3}  
\setcounter{equation}{0}
\setcounter{figure}{0}
\setcounter{table}{0}
\renewcommand{\theequation}{S-\arabic{equation}}
\renewcommand{\thefigure}{S\arabic{figure}}
\renewcommand{\thetable}{S-\Roman{table}}
\renewcommand\figurename{Supplementary Figure}
\renewcommand\tablename{Supplementary Table}

\newcolumntype{M}[1]{>{\centering\arraybackslash}m{#1}}
\newcolumntype{N}{@{}m{0pt}@{}}

\makeatletter \renewcommand\@biblabel[1]{[S#1]} \makeatother

\onecolumngrid

\section{positive label encoding states}

Here, we provide more details on the concatenation circuit.  We will provide the proof of Result~\ref{th:1}, and discuss the success probability for the general complex vectors. 

Initially, we are given the product of an extra qubit at state $|+\rangle = (\ket{0}+\ket{1})/\sqrt{2}$ and the label encoding states of $\bm{v^{(a)}}$ and $\bm{v^{(b)}}$. The initial state is given by (up to a normalization factor):
\begin{align}
|+\rangle|\bm{v^{(a)}}\rangle|\bm{v^{(b)}}\rangle.
\label{eq:s1}
\end{align}
According to definition, $\bm{v^{(a)}}, \bm{v^{(b)}}$ satisfies $v_i^{(a)},v_i^{(b)}\in \mathbb{C}$ and $|v_i^{(a)}|,|v_i^{(b)}|\leqslant1$.
Our goal is to transform Eq.~\eqref{eq:s1} to $\big|\bm{v^{(a)}}\oplus\bm{v^{(b)}}\big\rangle=|0\rangle\big|\bm{v^{(a)}}\rangle+|1\rangle|\bm{v^{(b)}}\big\rangle$. As described in the main text, we first apply a set of controlled-swap gates on $|\bm{v^{(a)}}\rangle$ and $|\bm{v^{(b)}}\rangle$ with the extra qubit as the control gate [see Fig.~\ref{fig:swap}]. The (unormalized) quantum state then becomes
\begin{align}
|\Psi_{\text{0}}\rangle=&\frac{1}{\sqrt{2}}|0\rangle|\bm{v^{(a)}}\rangle\oplus|\bm{v^{(b)}}\rangle+\frac{1}{\sqrt{2}}|1\rangle|\bm{v^{(b)}}\rangle\oplus|\bm{v^{(a)}}\rangle\notag\\
=&\frac{1}{\sqrt{2}}|0\rangle|\bm{v^{(a)}}\rangle\sum_{i=0}^{N-1}|n,i\rangle|v^{(b)}_i\rangle +\frac{1}{\sqrt{2}}|1\rangle|\bm{v^{(b)}}\rangle\sum_{i=0}^{N-1}|n,i\rangle|v^{(a)}_i\rangle. \label{eq:sup_swp}
\end{align}
We then project the last qubit (at state $|v^{(a)}_i\rangle$ or $|v^{(b)}_i\rangle$) to state $|+\rangle$. This is equal to applying the projection operator $M=\mathbb{I}_{2^{2n+2}}\otimes|+\rangle\langle+|$ to Eq.~\eqref{eq:sup_swp}. Because $\langle+|v^{(a)}_i\rangle=\langle+|v^{(b)}_i\rangle=1/\sqrt{2}$, we have  
\begin{align}
|\Psi_{\text{1}}\rangle=&M|\Psi_{\text{0}}\rangle\notag\\
=&M\left(\frac{1}{\sqrt{2}}|0\rangle|\bm{v^{(a)}}\rangle\sum_{i=0}^{N-1}|n,i\rangle|v^{(b)}_i\rangle +\frac{1}{\sqrt{2}}|1\rangle|\bm{v^{(b)}}\rangle\sum_{i=0}^{N-1}|n,i\rangle|v^{(a)}_i\rangle\right)\notag\\
=&\frac{1}{2}|0\rangle|\bm{v^{(a)}}\rangle\sum_{i=0}^{N-1}|n,i\rangle|+\rangle +\frac{1}{2}|1\rangle|\bm{v^{(b)}}\rangle\sum_{i=0}^{N-1}|n,i\rangle|+\rangle\notag\\
=&\frac{\sqrt{2}}{2}\left(|0\rangle\big|\bm{v^{(a)}}\big\rangle +|1\rangle\big|\bm{v^{(b)}}\big\rangle\right)\otimes|\bm{v_{\text{uni}}}\rangle\notag\\
=&\frac{1}{\sqrt{2}}\big|\bm{v^{(a)}}\oplus\bm{v^{(b)}}\big\rangle \otimes|\bm{v_{\text{uni}}}\rangle.\label{eq:sup_prj}
\end{align}

The normalization factor in Eq.~\eqref{eq:sup_swp} and Eq.~\eqref{eq:sup_prj} are 
\begin{subequations}
\begin{align}
\langle\Psi_{\text{0}}|\Psi_{\text{0}}\rangle=&(\sum_{i=0}^{N-1}|v_i^{(a)}|^2+|1-v_i^{(a)}|^2)(\sum_{i=0}^{N-1}|v_i^{(b)}|^2+|1-v_i^{(b)}|^2)=A_aA_b,\\
\langle\Psi_{\text{1}}|\Psi_{\text{1}}\rangle=&\frac{1}{2}\sum_{i=0}^{N-1}\left(|v_i^{(a)}|^2+|1-v_i^{(a)}|^2+|v_i^{(b)}|^2+|1-v_i^{(b)}|^2\right) N\left(\frac{1}{2}\right)^2\times2\notag\\
=&\frac{N}{4}(A_a+A_b),
\end{align}
\end{subequations}
\begin{figure} [t]\centering
\includegraphics[width=0.4\columnwidth]{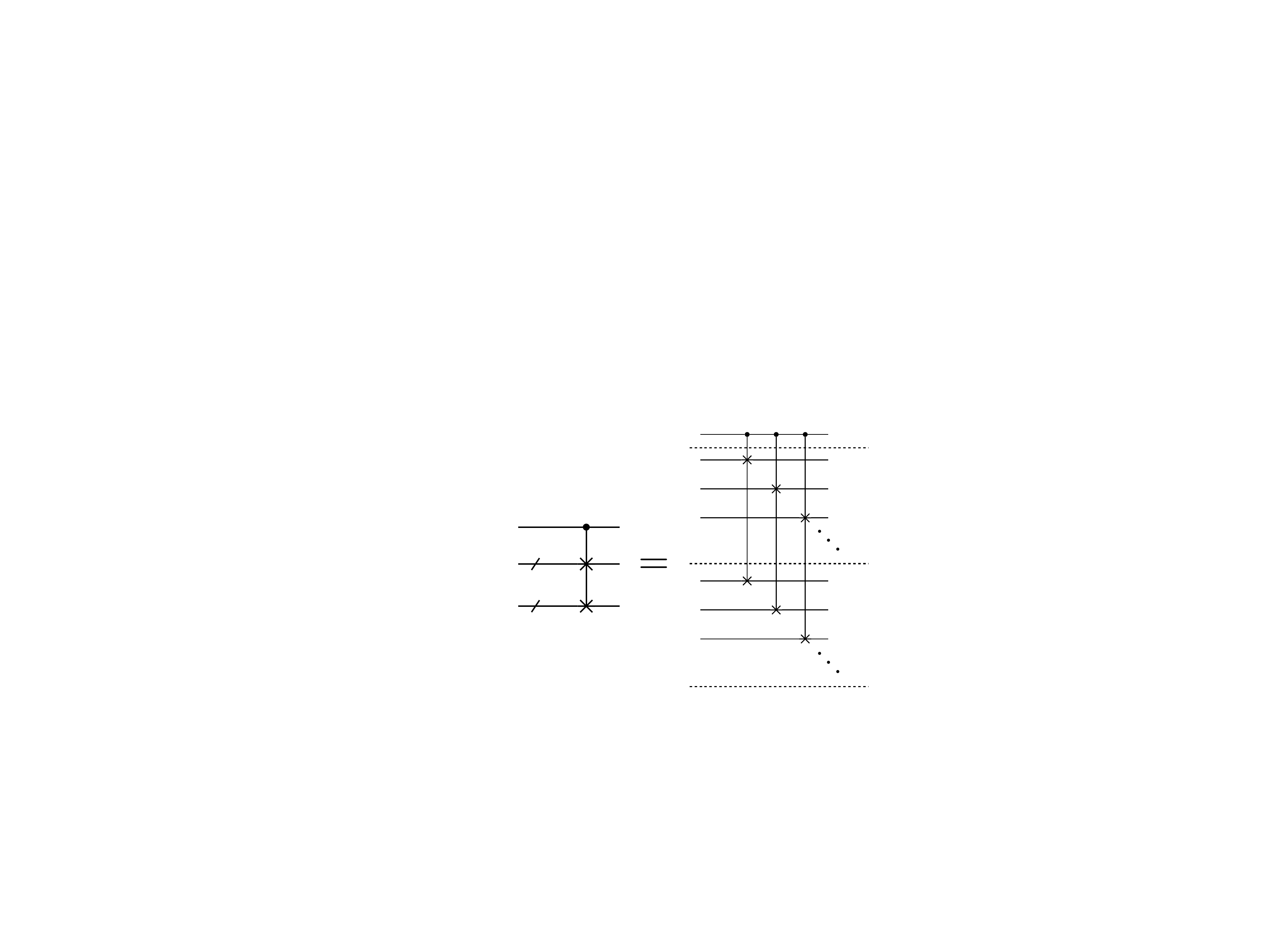}
\caption{Abbreviation of the control swap gate set.}
\label{fig:swap}
\end{figure}
where we have defined $A_a=\sum_{i=0}^{N-1}|v_i^{(a)}|^2+|1-v_i^{(a)}|^2$ and $A_b=\sum_{i=0}^{N-1}|v_i^{(b)}|^2+|1-v_i^{(b)}|^2$. 
The success probability of the projection can be calculated with the normalization factor of the quantum states before and after the projection
\begin{align}
p_+=\frac{|\langle\Psi_{\text{1}}|\Psi_{\text{1}}\rangle|}{|\langle\Psi_{\text{0}}|\Psi_{\text{0}}\rangle|}=\frac{N}{4}\frac{A_a+A_b}{A_aA_b}.\label{eq:+}
\end{align}
In Result~\ref{th:1}, all amplitudes are assumed to be positive, i.e. $v^{(a,b)}_i\in[0,1]$. So we have $A_{a,b}\in[\frac{1}{2}N,N]$. According to Eq.~\eqref{eq:+}, the success probability satisfies 
\begin{align}
p_+\in\left[\frac{1}{2},1\right].
\end{align}
So Result~\ref{th:1} holds true.

If $v_i^{(a,b)}$ are complex values, we have $A_{a,b}\in[\frac{1}{2}N,5N]$, which gives
\begin{align}
p_+\in\left[\frac{1}{10},1\right].
\end{align}
So $p_+$ is lower bounded by $1/10$.

\section{Runtime for parallel preparation of positive label encoding state}\label{sec:3}
In parallel quantum state preparation (Algorithm.~2 
in the main text), $c_0$ determines how many copies of each two-qubit state we should prepare. With larger $c_0$, it is less likely to have $c=0$ at line $9$, and therefore the average runtime is lower. In Sec.~\ref{sec:poly_c_0}, we estimate the average runtime for $c_0=\left\lceil N^{\beta_q-1}\right\rceil$ ($1\leqslant\beta_q<2$) numerically; in Sec.~\ref{sec:N_c_0}, we proof that when $c_0=\left\lceil K(N+N^{3/4})\right\rceil $, the average runtime is $\mathcal O\left((\log N)^2\right)$.

\subsection{$c_0=\left\lceil N^{\beta_q-1}\right\rceil $}\label{sec:poly_c_0}
The average runtimes are estimated numerically by simulating Algorithm.~2
. At each run, we initialize the number of steps as $t_{\text{stp}}=0$. Whenever line $8$ is reached, we update $t_{\text{stp}}=t_{\text{stp}}+\log_2\dim(\bm{x})$. This is because the runtime for line $8$ dominates the total runtime, and its circuit depth is proportional to $\log_2\dim(\bm{x})$. Algorithm.~2
 is run for $1000$ times for $N\in[4,10]$. We assume that the total average runtime increases polynomially as $\mathcal O(N^{\beta_t})$. For each $\beta_q$, the time exponent $\beta_t$ is estimated according to the slop of plot $\log_2(\overline{t_{\text{stp}}})$ vs $\log_2(N)$. Here, $\overline{t_{\text{stp}}}$ is the mean of $t_{\text{stp}}$ at the final step. For example, in Fig.~\ref{fig:s2}, the slop for $c_0=1$ is $1.52$, so the average runtime $T_{\text{pos}}$ can be estimated as $\mathcal O(N^{1.52})$.

\begin{figure} [!htbp]
\includegraphics[width=0.45\columnwidth]{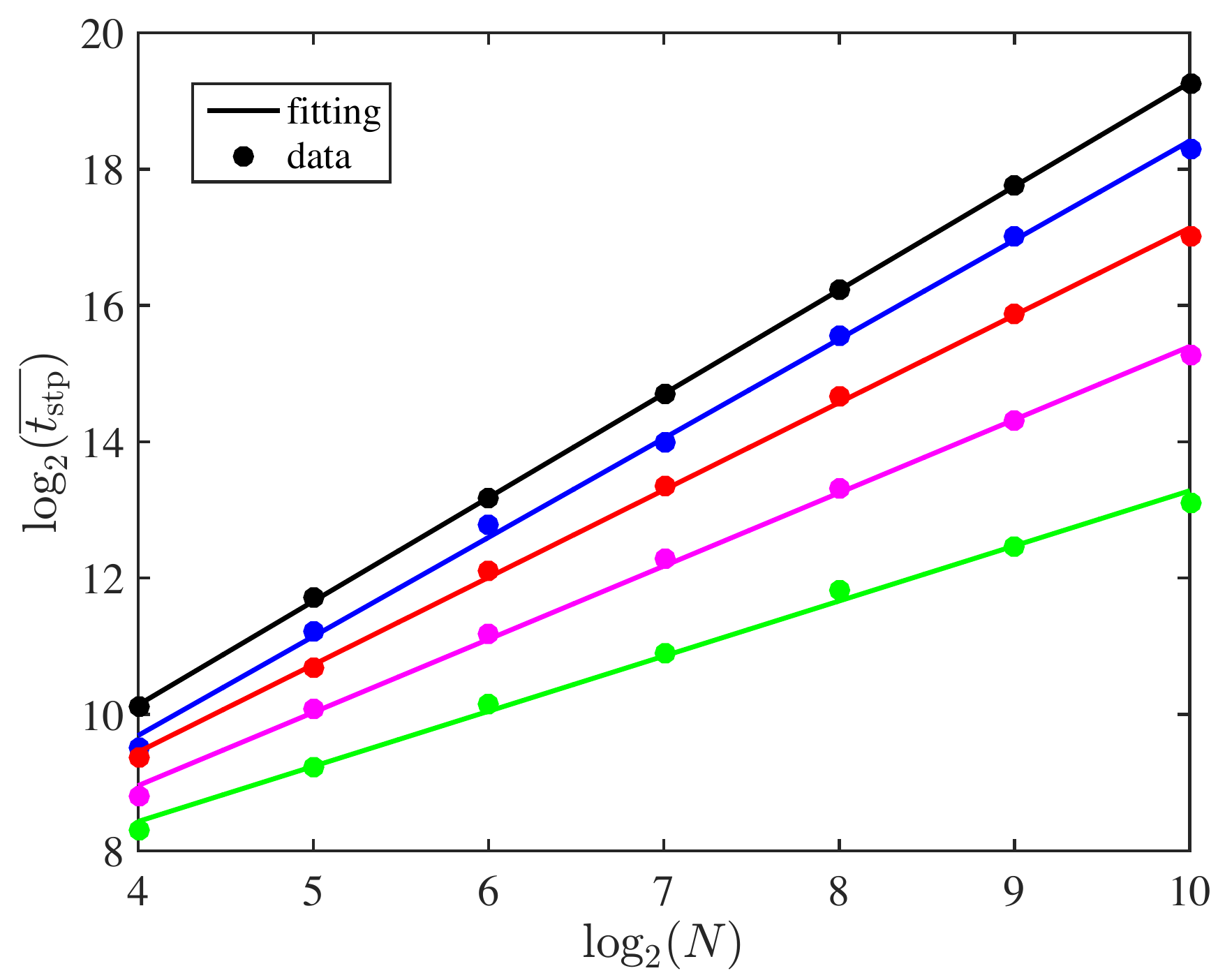}
\caption{ Numerical results of the average runtime for Algorithm.~2
 for $c_0=1$ (black), $c_0=\lceil N^{0.2}\rceil$ (blue), $c_0=\lceil N^{0.4} \rceil$ (red), $c_0=\lceil N^{0.6}\rceil$ (megenta) and $c_0=\lceil N^{0.8}\rceil$ (green). Dots are simulation results of $t_{\text{stp}}$ averaged over $1000$ times, lines are the corresponding linear fittings.}
\label{fig:s2}
\end{figure}

\begin{algorithm} [t]
\caption{: $\hat{g}_{\text{para}}(\bm{x},c_0)$  (subrutine of Algorithm~\ref{alg:4})}  
\label{alg:3}  
\begin{algorithmic}[1]

\STATE \textbf{If} $\bm{x}$ is two-dimensional:

\STATE \quad prepare $c_0$ copies of $|\bm{x}\rangle$ with unitary preparation method \textit{in parallel}

\STATE \quad \textbf{Output} $|\bm{x}\rangle^{\otimes c_0}$

\STATE \textbf{Else}: 

\STATE \quad let $i=\log_2[\dim(\bm{x})]-1$ 

\STATE \quad let $\bm{x^{(a)}}\oplus \bm{x^{(b)}}= \bm{x}$ 

\STATE \quad query $g_{\text{para}}(\bm{x^{(a)}},c_0)$ and $g_{\text{para}}(\bm{x^{(b)}},c_0)$ \textit{in parallel}, get return $|\bm{x^{(a)}}\rangle^{\otimes c_a}$  and $|\bm{x^{(b)}}\rangle^{\otimes c_b}$  

\STATE \quad define $c_{\min}=\min\{c_a,c_b\}$

\STATE \quad perform transformation $|\bm{x^{(a)}}\rangle\otimes|\bm{x^{(b)}}\rangle\rightarrow|\bm{x}\rangle$ for $c_{\min}$ times \textit{in parallel}, with $c'(i)$ trials success

\STATE \quad \textbf{Output} $|\bm{x}\rangle^{\otimes c'(i)}$

\end{algorithmic} 
\end{algorithm} 
\begin{algorithm}[!t]
\caption{: $g_{\text{para}}(\bm{x},c_0)$  }  
\label{alg:4}  
\begin{algorithmic}[1]

\STATE query $\hat{g}_{\text{para}}(\bm{x},c_0)$ to obtain $|\bm{x}\rangle^{\otimes c}$.

\STATE \textbf{If} $c>0$:

\STATE \quad \textbf{Output} $|\bm{x}\rangle^{\otimes c}$

\STATE \textbf{Else if} $c=0$:

\STATE \quad go to line 1
\end{algorithmic} 
\end{algorithm} 

\subsection{$c_0=\left\lceil N+N^{3/4}\right\rceil $} \label{sec:N_c_0}
In the following, we show that by setting $c_0=\left\lceil N+N^{3/4}\right\rceil$, the parallel preparation with Algorithm.~2 has an average runtime of $\mathcal O(n^2)$. To facilitate the discussion, we introduce a variance of  parallel preparation Algorithm~\ref{alg:4}. Obviously, the average runtime of Algorithm~\ref{alg:4} is always lower than Algorithm~2. So we can just focus on Algorithm~\ref{alg:4} in the following, and show that it has average runtime of $\mathcal O(n^2)$.

For a $2^{n+1}$ dimensional input vector $\bm{x}$, the average runtime of Algorithm~\ref{alg:3} is $\mathcal O(n^2)$. If one can show that, at line $\#$10, $\text{Pr}\left[c'(n)>0\right]$ (the probability that $c'(n)>0$) is larger than a non-zeros constant (for arbitrarily large $n$), the average runtime of Algorithm~\ref{alg:4} can be bounded by $\mathcal O(n^2)$.

To facilitate the discussion, we define $c_\text{bnd}(n,i)\equiv 2^{n-i}+2^{3(n-i)/4}$ (note that $c_\text{bnd}(n,i)>0$), and denote $P_{n,i}\equiv\text{Pr}[c'(i)>c_\text{bnd}(n,i)]$, $P_{n}\equiv P_{n,n}$. Now, one just need to show that $P_n$ is alway lower bounded by a non-zero constant.

At line $7$ of Algorithm~\ref{alg:3}, the probability of both $c_a>c_\text{bnd}(n,i-1)$ and  $c_b>c_\text{bnd}(n,i-1)$ are $P_{n,i-1}$. So for $c_{\min}$ at line $8$, we have $\text{Pr}[c_{\min}>c_\text{bnd}(n,i-1)]=P_{n,i-1}^2$. Therefore
\begin{align}
P_{n,i}&\geqslant\text{Pr}[c'(i)>c_\text{bnd}(n,i)|c_{\min}>c_\text{bnd}(n,i-1)]\cdot\text{Pr}[c_{\min}>c_\text{bnd}(n,i-1)]\notag\\
&\geqslant \text{Pr}[c'(i)>c_\text{bnd}(n,i)|c_{\min}>c_\text{bnd}(n,i-1)]P_{n,i-1}^2\notag\\
&\geqslant \text{Pr}[c'(i)>c_\text{bnd}(n,i)|c_{\min}=c_\text{bnd}(n,i-1)]P_{n,i-1}^2. \label{eq:ncc}
\end{align}
The last inequality is because $c'(i)$ follows binomial distribution $c'(i)\sim\text{B}(c_{\min},p_+)$ with $p_+>1/2$, and the probability that $c'(i)>c_\text{bnd}(n,i)$ increases monotonically with $c_{\min}$. $\text{Pr}[c'(i)>c_\text{bnd}(n,i)|c_{\min}=c_\text{bnd}(n,i-1)]$ is just the cumulative distribution function of Binomial distribution, and according to Hoeffding's inequality, we have 
\begin{align}
\text{Pr}[c'(i)>c_\text{bnd}(n,i)|c_{\min}=c_{\text{bnd}}(n,i-1)]&>f(n,i), \label{eq:hoe}
\end{align}
where
\begin{align}
f(n,i)\equiv1-\exp\left(-2c_\text{bnd}(n,i-1) \left(\frac{1}{2}-\frac{c_\text{bnd}(n,i)}{c_\text{bnd}(n,i-1)}\right)^2\right).
\end{align}
So we have 
\begin{align}
P_{n,i}>  P_{n,i-1}^2f(n,i),
\end{align}
and
\begin{align}
P_n=P_{n,n}>  P_{n,n-1}^2f(n,n)>P_{n,n-2}^2 f(n,n-1)^2f(n)\cdots >\prod_{i=1}^{n-1}f(n,i)^{2^{n-i}}.
\end{align}
Because $f(n,i)<1$ for all $i$, the cumulative  product  $\prod_{i=1}^{n-1} [1-f(n,i)]^{2^{n-i}}$ decreases with $n$ monotonically. So we have 
\begin{align}
P_n>\lim_{n\rightarrow\infty}P_n>\lim_{n\rightarrow\infty}\prod_{i=1}^{n-1} [1-f(n,i)]^{2^{n-i}}>0.006. \label{eq:lst}
\end{align}
 As $P_n$ is lower bounded, Algorithm~\ref{alg:4} has average runtime of $\mathcal O(n^2)$. Because the average runtime of Algorithm~2 is always lower than Algorithm~\ref{alg:4}, it also has average runtime of $\mathcal O(n^2)$.

\section{Projection success probability for preparing arbitrary quantum states}\label{sec:4}
Here, we discuss the projection success probability $p_s, p_s'$. In Sec.~\ref{sec:perfect} we discuss  perfect preparation and prove Result.~4; in Sec.~\ref{sec:cut} we discuss cut-off preparation and prove Result.~5. 

\subsection{Perfect preparation}\label{sec:perfect}
With preparation method used in Result.~2 for positive vectors, $p_s$ can be directly determined by the success probability of projecting the value qubit to $|0\rangle$:
\begin{align}
p_s=\frac{\sum_{i=0}^{N-1}|v_i|^2}{\sum_{i=0}^{N-1}|v_i|^2+\sum_{i=0}^{N-1}|1-v_i|^2}\geqslant\frac{1}{N}\sum_{i=0}^{N-1}|v_i|^2.\label{eq:ps_ieq}
\end{align}
$p'_s$ in Result 3 for general complex vectors is more involved. From Eq.~(5) to Eq.~(6), the projection operator can be written as $M'=|+\rangle\langle+|^{\otimes 2}\otimes \mathbb{I}_{2^{n+1}}\otimes (\mathbb{I}_{2^{n}}\otimes|+\rangle\langle+|)^{\otimes 3}$. The quantum state before measurement is [Eq.~(5) in main text]:
\begin{align}
|\Psi'_0\rangle=\left(|00\rangle|\bm{\tilde v^{abcd}}\rangle-|01\rangle|\bm{\tilde v^{bacd}}\rangle+i|10\rangle|\bm{\tilde v^{cbad}}\rangle-i|11\rangle|\bm{\tilde v^{dbca}}\rangle\right),
\end{align}
and the state after the projection is [Eq.~(6) in main text]:
\begin{align}
|\Psi'_1\rangle&=M'|\Psi_0\rangle\notag\\
&=\frac{1}{2}|+\rangle^{\otimes2}(|\bm{v^{a}}\rangle-|\bm{v^{b}}\rangle+i|\bm{v^{c}}\rangle-i|\bm{v^{d}}\rangle)\otimes|\bm{v^{\text{uni}}}\rangle^{\otimes3}\notag\\
&=\frac{1}{2}|+\rangle^{\otimes2}\sum_{i=0}^{N-1}|n,i\rangle\left[(v_i^{(a)}-v_i^{(b)}+iv_i^{(c)}-iv_i^{(d)})|0\rangle-(v_i^{(a)}-v_i^{(b)}+iv_i^{(c)}-iv_i^{(d)})|1\rangle)\right]\otimes|\bm{v^{\text{uni}}}\rangle^{\otimes3}\notag\\
&=\frac{1}{2}|+\rangle^{\otimes2}\sum_{i=0}^{N-1}v_i|n,i\rangle\left(|0\rangle-|1\rangle\right)\otimes|\bm{v^{\text{uni}}}\rangle^{\otimes3}\notag\\
&=\frac{1}{\sqrt{2}}|+\rangle^{\otimes2}|\psi(\bm{v})\rangle\otimes|-\rangle\otimes|\bm{v^{\text{uni}}}\rangle^{\otimes3}.
\end{align}
Therefore, $p'_s$ can be calculated as
\begin{align}
p'_s=\frac{|\langle\Psi'_\text{1}|\Psi'_\text{1}\rangle|}{|\langle\Psi'_\text{0}|\Psi'_\text{0}\rangle|}.\label{eq:ps'_def}
\end{align}
Because $|\langle\bm{ v^{a}}|\bm{ v^{a}}\rangle|=\sum_{i=0}^{N-1}(v^a_i)^2+(1-v^a_i)^2\leqslant N $ and similar for $|\bm{ v^{b}}\rangle, |\bm{\tilde v^{c}}\rangle, |\bm{ v^{d}}\rangle$, we have $|\langle\bm{\tilde v^{abcd}}|\bm{ v^{abcd}}\rangle|\leqslant N^4$. So we have $|\langle\Psi'_\text{0}|\Psi'_\text{0}\rangle|\leqslant 4N^4$. Besides, one can calculate that $|\langle\Psi'_\text{1}|\Psi'_\text{1}\rangle|= \frac{N^3}{16}\sum_{i=0}^{N-1}|v_i|^2$. Therefore,
\begin{align}
p'_s\geqslant \frac{1}{64N}\sum_{i=0}^{N-1}|v_i|^2.\label{eq:ps'_ieq}
\end{align}
In the following, we prove Result.~4 for sampling case 1 and sampling case 2 separately.

\subsubsection{Sampling case 1}
In case 1, $|v_i|$ distributes uniformly in $[0,1]$, so we have 
\begin{subequations}
\begin{align}
&\text{mean}\left(\frac{1}{N}\sum_{i=0}^{N-1}|v_i|^2\right)=1/3,\\
&\text{mean}\left(\frac{1}{N}\sum_{i=0}^{N-1}|v_i|^2+|1-v_i|^2\right)=2/3, 
\end{align}
\end{subequations}
For $x,y>0$, we have
\begin{align}\label{eq:ps1}
\text{Pr}[p_s\geqslant x/y]&\geqslant  \text{Pr}\left[\frac{\frac{1}{N}\sum_{i=0}^{N-1}|v_i|^2}{   \frac{1}{N}\sum_{i=0}^{N-1}|v_i|^2+|1-v_i|^2  }\geqslant x/y\right]\notag\\
&\geqslant \text{Pr}\left[\frac{1}{N}\sum_{i=0}^{N-1}|v_i|^2\geqslant x \text{ and } \frac{1}{N}\sum_{i=0}^{N-1}|v_i|^2+|1-v_i|^2\leqslant y\right] \notag\\
&\geqslant 1-\text{Pr}\left[\frac{1}{N}\sum_{i=0}^{N-1}|v_i|^2< x\right]-\text{Pr}\left[\frac{1}{N}\sum_{i=0}^{N-1}|v_i|^2+|1-v_i|^2> y\right],
\end{align}
where we have used the relation $\text{Pr}[A \text{ and } B]=1-\text{Pr}[A \text{ and } \overline{B}]-\text{Pr}[\overline{A} \text{ and } B]-\text{Pr}[\overline{A} \text{ and } \overline{B}]=1-\text{Pr}[A \text{ and } \overline{B}]-\text{Pr}[\overline{A} ] \geqslant1-\text{Pr}[\overline{B}]-\text{Pr}[\overline{A}]$.

According to the Chernoff bound, for any $t_1,t_2>0$, we have
\begin{subequations}\label{eq:ch1}
\begin{align}
\text{Pr}\left[\frac{1}{N}\sum_{i=0}^{N-1}|v_i|^2< x\right]&\leqslant\left[\text{mean}(e^{-t_1|v_i|^2})\right]^Ne^{t_1Nx}=\left[\frac{\sqrt{\pi}\text{erf}(\sqrt{t_1})}{2\sqrt{t_1}}e^{t_1x}\right]^N\leqslant\left[\frac{\sqrt{\pi}}{2}\frac{e^{t_1x}}{\sqrt{t_1}}\right]^N,
\\
\text{Pr}\left[\frac{1}{N}\sum_{i=0}^{N-1}|v_i|^2+|1-v_i|^2> y\right]&\leqslant\text{mean}\left(e^{t_2(|v_i|^2+|1-v_i|^2)}\right)^Ne^{-Nt_2y}=\left[\frac{e^{t_2/2}\sqrt{\pi/2}\text{erfi}(\sqrt{t_2/2})}{\sqrt{t_2}}e^{-t_2y}\right]^N\leqslant\left[0.9\frac{e^{t_2(1-y)}}{\sqrt{t_2}}\right]^N.
\end{align}
\end{subequations}
Let $x=\frac{1}{5}(\delta/2)^{2/N}$, $y=1-\frac{1}{5}(\delta/2)^{2/N}$ and set $t_1=1/(2x)$, $t_2=1/(2-2y)$, Eq.~\eqref{eq:ch1} becomes
\begin{subequations}
\begin{align}
&\text{Pr}\left[\frac{1}{N}\sum_{i=0}^{N-1}|v_i|^2< \frac{1}{5}(\delta/2)^{2/N}\right]\leqslant\delta/2,\\
&\text{Pr}\left[\frac{1}{N}\sum_{i=0}^{N-1}|v_i|^2+|1-v_i|^2>-\frac{1}{5}(\delta/2)^{2/N} \right]\leqslant\delta/2,
\end{align}
\end{subequations}
So Eq.~\eqref{eq:ps1} becomes
\begin{align}
\text{Pr}\left[p_s\geqslant \frac{\frac{1}{5}(\delta/2)^{2/N}}{1-\frac{1}{5}(\delta/2)^{2/N}}\right]&\geqslant1-\delta.
\end{align}
In other words, $p_s$ is at the order of $\mit\Omega(\delta^{1/N})$. For $p_s'$, the process is similar.

\subsubsection{Sampling case 2}
Because $v_i=u_i/\max(|u_i|)$, we can rewrite Eq.~\eqref{eq:ps_ieq} and Eq.~\eqref{eq:ps'_ieq} as 
\begin{subequations}\label{eq:ps2_2}
\begin{align}
p_s&\geqslant\frac{1}{N}\frac{\sum_{i=0}^{N-1}|a_i|^2}{\max|a_i|^2}\label{ps_val},\\
p'_s&\geqslant \frac{1}{64N}\frac{\sum_{i=0}^{N-1}|a_i|^2}{\max{|a_i|^2}}\label{ps'_val}.
\end{align}
\end{subequations}
In case 2, $a_i=a_i^{(\text{r})}+ia_i^{(\text{m})}$, and the real and imaginary part of $a_i$ are independently sampled from standard normal distribution. $|a_i|$ follows Rayleigh distribution, and for $a_m\in(0,\infty)$, we have
\begin{align}
\text{Pr}[|a_i|<a_m]=1-e^{-a_m^2/2}.
\end{align}
So the maximum over all $|a_i|$ is given by 
\begin{align}
\text{Pr}[\max{|a_i|^2}<a^2_m]&=\left(1-e^{-a_m^2/2}\right)^N\notag\\
&\geqslant1-Ne^{-a_m^2/2}.
\end{align}
By setting $a_m=\sqrt{2\log (2N/\delta)}$, we have
\begin{align}
\text{Pr}[\max{|a_i|^2}\geqslant2\log (2N/\delta)]&<\delta/2.\label{eq:am}
\end{align}
Moreover, it can be calculated that
 \begin{subequations}
\begin{align}
\text{mean}\left(|a_i|^2\right)&=2,\\
\text{mean}\left(e^{-t{|a_i|}^2}\right)&=1/(1 + 2 t).
\end{align}
 \end{subequations}
According to the Chernoff bound, for any $t>0$, we have
\begin{align}
\text{Pr}\left[\frac{1}{N}\sum_{i=0}^{N-1}|a_i|^2\leqslant x\right]\leqslant \left(\frac{e^{tx}}{1+2t}\right)^N\leqslant\left(\frac{e^{tx}}{2t}\right)^N.
\end{align}
Let $x=\frac{1}{2}\left(\delta/2\right)^{1/N}$ and $t=1/x$, we have
\begin{align}
\text{Pr}\left[\frac{1}{N}\sum_{i=0}^{N-1}|a_i|^2\leqslant\frac{\left(\delta/2\right)^{1/N}}{2}\right]\leqslant \delta/2. \label{eq:ai_2}
\end{align}
Combining Eq.\eqref{eq:ps2_2}, Eq.\eqref{eq:am} and Eq.\eqref{eq:ai_2}, we have
\begin{align}
\text{Pr}\left[p_s\geqslant\frac{(\delta/2)^{1/N}}{4\log (2N/\delta)}\right]\geqslant&\text{Pr}\left[\frac{\sum_{i=0}^{N-1}|a_i|^2}{N\max{|a_i|^2}}
\geqslant\frac{(\delta/2)^{1/N}}{4\log (2N/\delta)}\right]\notag\\
\geqslant&1-\text{Pr}\left[\frac{1}{N}\sum_{i=0}^{N-1}|a_i|^2<\frac{(\delta/2)^{1/N}}{2}\right]- \text{Pr}\left[\max{|a_i|^2}>2\log 2N/\delta  \right]\notag\\
\geqslant& 1-\delta/2-\delta/2\notag\\
= &1-\delta.
\end{align}
In other words, $p_s=\mit\Omega\left(\frac{\delta^{1/N}}{\log (N/\delta)}\right)$. For $p_s'$ the process is similar.

\subsection{Cut-off preparation}\label{sec:cut}

 With the cut-off value $u_{\text{cut}}$ and define $\tilde{v}_i\equiv\arg(u_i)\min(|u_i|/u_\text{cut},1)$, the normalized cut-off target states becomes $|\phi(\bm{\tilde{v}})\rangle=\frac{1}{\sqrt{\sum_{i=0}^{N-1}|\tilde{v}_i|^2}}\sum_{i=0}^{N-1}\tilde{v}_i|n,i\rangle$. For sampling case 2, it can be further rewritten as 
\begin{align}
|\phi(\bm{\tilde{v}})\rangle=\frac{1}{\|\bm{\tilde{a}}\|_2}\sum_{i=0}^{N-1}\tilde{a}_i|n,i\rangle,
\end{align}
 where $\|\bm{\tilde{a}}\|_2\equiv\sqrt{\sum_{i=0}^{N-1}|\tilde{a}_i|^2}$, and $\tilde{a}_i\equiv\arg(a_i)\min(|a_i|,u_{\text{cut}}\|\bm{a}\|_2)$.
The proof of Result 5 follows from two lemmata as follow.

 \begin{lemma}\label{lm:2}
By setting $u_{\text{cut}}^2=\frac{8}{N}\left(4/\delta\right)^{1/N}\log\left(12/(\varepsilon_{\text{th}}\delta)\right)$, we have
\begin{align}
\text{Pr}\left[F\geqslant1-\varepsilon_{\text{th}}\right]
\geqslant&1-\delta/2.
\end{align}
\end{lemma}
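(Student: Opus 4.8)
The plan is to first reduce the claim to a deterministic inequality that bounds the infidelity by the amount of squared-norm removed during clipping, and then control that removed mass probabilistically. Since $\tilde a_i$ and $a_i$ share the same phase, the overlap $\langle\phi(\bm{\tilde v})|\phi(\bm u)\rangle$ is real and equals $\sum_i|\tilde a_i|\,|a_i|/(\|\bm{\tilde a}\|_2\|\bm a\|_2)$. Writing $S=\|\bm a\|_2^2$ and $c=u_{\text{cut}}\|\bm a\|_2$, and introducing the clipping deficits $D=\sum_{|a_i|>c}(|a_i|-c)|a_i|$ and $D'=\sum_{|a_i|>c}(|a_i|^2-c^2)$, I would check that $\|\bm{\tilde a}\|_2^2=S-D'$ and $\sum_i|\tilde a_i||a_i|=S-D$, so that $F=(S-D)^2/[(S-D')S]$. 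Using $D\le D'\le 2D$ this yields the deterministic bound $1-F\le D'/(S-D')$; hence it suffices to force $D'\le\varepsilon_{\text{th}}S/2$, which gives $1-F\le\varepsilon_{\text{th}}$ for $\varepsilon_{\text{th}}<1$.

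The hard part, and the main obstacle, is that the cutoff magnitude $c=u_{\text{cut}}\|\bm a\|_2$ is itself a function of the random vector $\bm a$, so $D'$ is not a sum of independent contributions against a fixed threshold. I would resolve this by a monotonicity argument: the removed mass $c\mapsto\sum_i(|a_i|^2-c^2)_+$ is non-increasing in $c$, so on the high-probability event that the norm is not too small — say $S\ge S_{\min}:=\tfrac{N}{2}(\delta/4)^{1/N}$ — the random threshold satisfies $c^2\ge c_0^2:=u_{\text{cut}}^2S_{\min}$ and therefore $D'\le\tilde D:=\sum_i(|a_i|^2-c_0^2)_+$, where $c_0^2$ is now deterministic. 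With the stated $u_{\text{cut}}$ the factors $(4/\delta)^{1/N}$ and $(\delta/4)^{1/N}$ cancel and $c_0^2=4\log(12/(\varepsilon_{\text{th}}\delta))$, which is precisely the choice that makes the constants close.

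It then remains to bound two failure probabilities, each by $\delta/4$. For the norm I would reuse the Chernoff estimate already derived for sampling case 2 (with $\delta/2$ replaced by $\delta/4$), giving $\text{Pr}[S<S_{\min}]\le\delta/4$. For the clipped mass, the memorylessness of the exponential law of $|a_i|^2$ (mean $2$) gives $\mathbb E[\tilde D]=2N\,e^{-c_0^2/2}=2N(\varepsilon_{\text{th}}\delta/12)^2$, so Markov's inequality yields $\text{Pr}[\tilde D>\varepsilon_{\text{th}}S_{\min}/2]\le 2\mathbb E[\tilde D]/(\varepsilon_{\text{th}}S_{\min})\le\delta/4$ once $u_{\text{cut}}$ is substituted and $\varepsilon_{\text{th}},\delta<1$ is used. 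On the complement of both failure events one has $D'\le\tilde D\le\varepsilon_{\text{th}}S_{\min}/2\le\varepsilon_{\text{th}}S/2$, so the deterministic bound gives $F\ge1-\varepsilon_{\text{th}}$, and a union bound delivers the claimed probability $1-\delta/2$. The delicate point to verify carefully is this second estimate, because $\tilde D$ and $S$ are dependent; the coupling must be handled by replacing $1/S$ with $1/S_{\min}$ on the good norm event rather than by pretending they are independent.
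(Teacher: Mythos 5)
Your proposal is correct and follows essentially the same route as the paper's proof: reduce the infidelity to the clipped mass, condition on the norm event $\|\bm{a}\|_2^2\geqslant \frac{N}{2}(\delta/4)^{1/N}$ so that the random threshold $u_{\text{cut}}\|\bm{a}\|_2$ can be replaced by a deterministic one, then bound the two failure events by $\delta/4$ each via Chernoff (norm) and Markov (clipped mass), exactly as in the paper. The only cosmetic differences are that you track the norm deficit $D'=\sum_i(|a_i|^2-c^2)_+$ with its exact exponential-tail expectation, whereas the paper works with the cross-term deficit $\Delta_i=|a_i|\max(0,|a_i|-c_0)$ and the bound $F\geqslant 1-2\sum_i\Delta_i/\|\bm{a}\|_2^2$; both yield the same constants.
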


\begin{proof}

Because
\begin{align}
|\langle\phi(\bm{u})|\phi(\bm{\tilde{v}})\rangle|=&\frac{1}{\|\bm{a}\|_2\|\bm{\tilde{a}}\|_2}  \sum_{i}  |a_i\tilde{a_i}|\notag\\
\geqslant&\frac{1}{{\|\bm{a}\|_2}^2}  \sum_{i}  |a_i\tilde{a_i}|\notag\\
=&\frac{1}{{\|\bm{a}\|_2}^2}\sum_{i} (|a_i|^2-|a_i|\max(0,|a_i|-u_\text{cut}\|\bm{a}\|_2))\notag\\
=&1-\frac{1}{{\|\bm{a}\|_2}^2}\sum_{i}|a_i|\max(0,|a_i|-u_\text{cut}\|\bm{a}\|_2),
\end{align}
we have
\begin{align}
F&=|\langle\phi(\bm{u})|\phi(\bm{\tilde{v}})\rangle|^2\notag\\
&\geqslant1-2\frac{1}{{\|\bm{a}\|_2}^2}\sum_{i}|a_i|\max(0,|a_i|-u_\text{cut}\|\bm{a}\|_2).
\end{align}
For any $x>0$, we have
\begin{align}
\text{Pr}\left[F\geqslant1-\varepsilon_{\text{th}}\right]\geqslant& \text{Pr}\left[\frac{1}{{\|\bm{a}\|_2}^2}\sum_{i}a_i\max(0,|a_i|-u_\text{cut}\|\bm{a}\|_2)\leqslant\frac{\varepsilon_{\text{th}}}{2}\right]\notag\\
\geqslant&\text{Pr}\left[{\|\bm{a}\|_2}^2\geqslant Nx^2 \text{ and } \sum_{i}a_i\max(0,|a_i|-u_\text{cut}\|\bm{a}\|_2)\leqslant\frac{\varepsilon_{\text{th}}Nx^2}{2}\right]\notag\\
\geqslant&\text{Pr}\left[{\|\bm{a}\|_2}^2\geqslant Nx^2 \text{ and } \sum_{i}a_i\max(0,|a_i|-u_\text{cut}\sqrt{N}x)\leqslant\frac{\varepsilon_{\text{th}}Nx^2}{2}\right]\notag\\
\geqslant&1-\text{Pr}\left[{\|\bm{a}\|_2}^2< Nx^2\right]-\text{Pr}\left[\sum_{i}a_i\max(0,|a_i|-u_{\text{cut}}\sqrt{N}x)> \frac{\varepsilon_{\text{th}}Nx^2}{2}  \right]\notag\\
=&1-\text{Pr}\left[{\|\bm{a}\|_2}^2< Nx^2\right]-\text{Pr}\left[\sum_i\Delta_i> \frac{\varepsilon_{\text{th}}Nx^2}{2}  \right],
\label{eq:fid_1}
\end{align}
 where we have defined $\Delta_i=|a_i|\max(0,|a_i|-u_{\text{cut}}\sqrt{N}x)$. After some calculation, we find that for any $t>0$, we have
\begin{align}
&\text{mean}(e^{-t{|a_i|}^2})=1/(1 + 2 t)
\end{align}
and
\begin{align}
& \text{mean}(\Delta_i)\leqslant 2e^{-N(u_{\text{cut}}x)^2/2}.
\end{align}
From Chernoff bound, we have
\begin{subequations}
\begin{align}
&\text{Pr}\left[{\|\bm{a}\|_2}^2< Nx^2\right]\leqslant \left[\frac{e^{tx^2}}{1 + 2 t}\right]^N,\\
\end{align}
\end{subequations}
Let $x^2=\frac{(\delta/4)^{1/N}}{2}$ and $t=(\delta/4)^{-1/N}$, we have

\begin{align}
\text{Pr}\left[{\|\bm{a}\|_2}^2<N\frac{(\delta/4)^{1/N}}{2} \right]\leqslant \left(\frac{e^{0.5}}{1+2(\delta/4)^{-1/N}}\right)^N\leqslant\left(\frac{e^{0.5}}{2}\right)^N     \left(\frac{1}{(\delta/4)^{-1/N}}\right)^N\leqslant\delta/4.\label{eq:as}
\end{align}
Let $u_{\text{cut}}^2=\frac{1}{x^2}\frac{4}{N}\log\left(\frac{12}{\varepsilon_{\text{th}}\delta}\right)=\frac{8}{N}(\frac{4}{\delta})^{1/N}\log\left(\frac{12}{\varepsilon_{\text{th}}\delta}\right)$, from Markov inequality, we have

\begin{align}
&\text{Pr}\left[\sum_i\Delta_i>N\frac{\varepsilon_{\text{th}}x^2}{2}  \right]\leqslant\frac{N\text{mean}(\Delta_i)}{N\varepsilon_{\text{th}}x^2/2}\leqslant\frac{4\exp[-\log(144/\varepsilon_{\text{th}}^2\delta^2)]}{\varepsilon_{\text{th}}x^2}  \leqslant \frac{4\varepsilon_{\text{th}}^2\delta^2}{144\varepsilon_{\text{th}}\frac{(\delta/4)^{1/N}}{2}}   \leqslant  \frac{\delta^2}{18(\delta/4)}   \leqslant  \delta/4
\label{eq:dl}
\end{align}

Combining Eq.~\eqref{eq:fid_1}, Eq.~\eqref{eq:as} and Eq.~\eqref{eq:dl}, Lemma.~\ref{lm:2} holds true.
\end{proof}

\begin{lemma}\label{lm:3}

By setting $u_{\text{cut}}^2=\frac{8}{N}\left(4/\delta\right)^{1/N}\log\left(12/(\varepsilon_{\text{th}}\delta)\right)$, we have

\begin{align}
\text{Pr}\left[p_s\geqslant C_p\right]\geqslant1-\delta/2
\label{eq:sup_ps_1}
\end{align}
where

\begin{align}
C_p&= \frac{(\delta/4)^{1/N}}{u_{\text{cut}}^212N\log\left((4/\delta)\right)}\\
&= \frac{(\delta/4)^{2/N}}{96\log\left(8/(\varepsilon_{\text{th}}\delta)\right)\log\left(4/\delta\right)}
\end{align}

\end{lemma}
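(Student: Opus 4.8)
The plan is to reduce $p_s$ to a ratio of two random norms and to control numerator and denominator with the same Chernoff/Markov estimates already used for Lemma~\ref{lm:2}. Starting from the worst-case bound $p_s\ge\text{mean}(|\tilde v_i|^2)$ (Eq.~\eqref{eq:ps_ieq} applied to the cut-off vector $\bm{\tilde v}$) and using $u_i=a_i/\|\bm a\|_2$, I would first rewrite
\begin{equation}
p_s\ge\frac{1}{N}\sum_{i=0}^{N-1}|\tilde v_i|^2=\frac{\|\bm{\tilde a}\|_2^2}{N\,u_{\text{cut}}^2\,\|\bm a\|_2^2},
\end{equation}
with $\tilde a_i=\arg(a_i)\min(|a_i|,u_{\text{cut}}\|\bm a\|_2)$ as in Sec.~\ref{sec:cut}. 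It then suffices to lower bound $\|\bm{\tilde a}\|_2^2$ and upper bound $\|\bm a\|_2^2$, each failing with probability at most a fixed fraction of $\delta$, and to union-bound these events so that the total failure probability is at most $\delta/2$.

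The denominator is routine: $\|\bm a\|_2^2=\sum_i|a_i|^2$ is a sum of i.i.d.\ variables with $\text{mean}(e^{t|a_i|^2})=1/(1-2t)$, so the upper Chernoff tail $\text{Pr}[\|\bm a\|_2^2>y]\le(1-2t)^{-N}e^{-ty}$ with a fixed $t<1/2$ and $y=6N\log(4/\delta)$ yields $\|\bm a\|_2^2\le 6N\log(4/\delta)$ with the desired probability.

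The numerator is the crux. I would write the truncation loss as
\begin{equation}
\|\bm a\|_2^2-\|\bm{\tilde a}\|_2^2=\sum_{i=0}^{N-1}\max\!\big(0,|a_i|^2-u_{\text{cut}}^2\|\bm a\|_2^2\big),
\end{equation}
and bound each surviving term via $\alpha^2-\beta^2\le 2\alpha(\alpha-\beta)$ for $\alpha\ge\beta\ge0$, giving a dominating sum $2\sum_i|a_i|\max(0,|a_i|-u_{\text{cut}}\|\bm a\|_2)$. The obstacle is that the threshold $u_{\text{cut}}\|\bm a\|_2$ is itself a function of the whole data vector, so these summands are not independent and no clean MGF is available. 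I would break the coupling exactly as in Lemma~\ref{lm:2}: condition on the lower-tail event $\|\bm a\|_2^2\ge Nx^2$ with $x^2=(\delta/4)^{1/N}/2$ (Eq.~\eqref{eq:as}), on which $u_{\text{cut}}\|\bm a\|_2\ge u_{\text{cut}}\sqrt N x$; by monotonicity the loss is then dominated by $2\sum_i\Delta_i$ with the deterministic threshold $\Delta_i=|a_i|\max(0,|a_i|-u_{\text{cut}}\sqrt N x)$, and the estimate $\sum_i\Delta_i\le N\varepsilon_{\text{th}}x^2/2$ of Eq.~\eqref{eq:dl} applies verbatim. On the intersection of these events this gives $\|\bm{\tilde a}\|_2^2\ge\|\bm a\|_2^2-N\varepsilon_{\text{th}}x^2\ge Nx^2(1-\varepsilon_{\text{th}})=\tfrac{N}{2}(\delta/4)^{1/N}(1-\varepsilon_{\text{th}})$.

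Combining the three tail events (each re-tuned to fail with probability at most $\delta/6$, so their union fails with probability at most $\delta/2$) and dividing the numerator lower bound by the denominator upper bound yields
\begin{equation}
p_s\ge\frac{\tfrac{N}{2}(\delta/4)^{1/N}}{N u_{\text{cut}}^2\cdot 6N\log(4/\delta)}=\frac{(\delta/4)^{1/N}}{12\,N u_{\text{cut}}^2\log(4/\delta)}=C_p,
\end{equation}
where the harmless factor $1-\varepsilon_{\text{th}}\in(0,1)$ is absorbed into the constant. (Alternatively, one may keep $\|\bm a\|_2^2$ uncancelled, so that $p_s\ge(1-\varepsilon_{\text{th}})/(Nu_{\text{cut}}^2)$ follows from only the two events of Lemma~\ref{lm:2} and the denominator bound is not even needed.) I expect the only real difficulty to be the data-dependent truncation threshold; once it is frozen on $\{\|\bm a\|_2^2\ge Nx^2\}$, every remaining step is a direct reuse of the computations behind Lemma~\ref{lm:2}.
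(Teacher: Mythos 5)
Your proposal is correct in substance, but it takes a genuinely different route from the paper's proof, and it is worth comparing the two. The paper freezes the random truncation threshold $u_{\text{cut}}\|\bm{a}\|_2$ from the \emph{opposite} side: it takes the upper-tail event $\|\bm{a}\|_2^2\leqslant y^2$ (so that $\min\big(|a_i|^2/(u_{\text{cut}}\|\bm{a}\|_2)^2,1\big)\geqslant\min\big(|a_i|^2,(u_{\text{cut}}y)^2\big)/(u_{\text{cut}}y)^2$) and then runs a fresh Chernoff \emph{lower} bound on the truncated sum $\sum_i\Delta_i'$ with $\Delta_i'=\min\big(|a_i|^2,(u_{\text{cut}}y)^2\big)$, which requires computing the moment generating function $\text{mean}(e^{-t_2\Delta_i'})$ of a truncated chi-squared variable; only two bad events appear, each of probability at most $\delta/4$. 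You instead lower bound $\|\bm{\tilde a}\|_2^2$ through the truncation-loss identity and $\alpha^2-\beta^2\leqslant2\alpha(\alpha-\beta)$, which lets you recycle verbatim the two tail estimates of Lemma~\ref{lm:2} (the lower tail \eqref{eq:as} to freeze the threshold, and the Markov bound \eqref{eq:dl}), at the price of a third, routine upper chi-squared tail for the denominator and a re-tuning of constants to $\delta/6$ per event. What your route buys is economy: no new moment-generating-function computation is needed, and your parenthetical alternative is even neater, since keeping $\|\bm{a}\|_2^2$ uncancelled gives $p_s\geqslant(1-\varepsilon_{\text{th}})/(Nu_{\text{cut}}^2)$ from the two Lemma~\ref{lm:2} events alone, with total failure probability exactly $\delta/2$ and a bound that dominates $C_p$ whenever $1-\varepsilon_{\text{th}}\geqslant(\delta/4)^{1/N}/\big(12\log(4/\delta)\big)$.

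The one point you dismiss too quickly is the factor $1-\varepsilon_{\text{th}}$: it is a parameter, not a constant, so strictly your main route proves $p_s\geqslant(1-\varepsilon_{\text{th}})C_p$, and your alternative route proves $p_s\geqslant C_p$ only for $\varepsilon_{\text{th}}$ bounded away from $1$ (e.g.\ $\varepsilon_{\text{th}}\leqslant15/16$). The paper's argument has no such loss, because its lower bound on $\sum_i\Delta_i'$ never subtracts an $\varepsilon_{\text{th}}$-dependent term, so it recovers $C_p$ for all $\varepsilon_{\text{th}}\in(0,1)$. In the regime where the lemma is actually used (high fidelity, small $\varepsilon_{\text{th}}$, and the asymptotic statement of Result~5) this weakening is immaterial, but as a proof of Lemma~\ref{lm:3} exactly as stated you should either restrict $\varepsilon_{\text{th}}$ or carry the factor $1-\varepsilon_{\text{th}}$ explicitly into $C_p$.
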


\begin{proof}

According to Eq.~\eqref{eq:ps_ieq}, we have $p_s\geqslant\frac{1}{N}\sum_{i=0}^{N-1}\min\left(\frac{|a_i|^2}{(u_{\text{cut}}{\|\bm{a}\|_2})^2},1\right)$. So for any $y>0$,

\begin{align}
\text{Pr}\left[p_s\geqslant\frac{(\delta/4)^{1/N}}{3u_{\text{cut}}^2y^2}\right]\geqslant&\text{Pr}\left[\frac{1}{N}\sum_{i=0}^{N-1}\min\left(\frac{|a_i|^2}{(u_{\text{cut}}{\|\bm{a}\|_2})^2},1\right)\geqslant \frac{(\delta/4)^{1/N}}{3u_{\text{cut}}^2y^2}\right]\notag\\
\geqslant&1-\text{Pr}\left[ {\|\bm{a}\|_2}^2  > y^2\right]-\text{Pr}\left[ \frac{1}{N}\sum_{i=0}^{N-1}\min\left(\frac{|a_i|^2}{(u_{\text{cut}}y)^2},1\right)< \frac{(\delta/4)^{1/N}}{3u_{\text{cut}}^2y^2}\right],\notag\\
\geqslant&1-\text{Pr}\left[ {\|\bm{a}\|_2}^2  > y^2\right]-\text{Pr}\left[ \frac{1}{N}\sum_{i=0}^{N-1}\min\left(|a_i|^2,(u_{\text{cut}}y)^2\right)< \frac{(\delta/4)^{1/N}}{3}\right],\notag\\
\geqslant&1-\text{Pr}\left[ {\|\bm{a}\|_2}^2  > y^2\right]-\text{Pr}\left[ \sum_{i=0}^{N-1}\Delta'_i< \frac{N(\delta/4)^{1/N}}{3}\right],
\label{eq:p_s_cut}
\end{align}
where we have defined $\Delta_i'=\min\left(|a_i|^2,(u_{\text{cut}}y)^2\right)$. Here, $u_{\text{cut}}^2=\frac{8}{N}\left(4/\delta\right)^{1/N}\log\left(12/(\varepsilon_{\text{th}}\delta)\right)$ as set in Lemma.~\ref{lm:2}.
For $0<t_1<1/2$, we have
\begin{align}
\text{mean}(e^{t_1|a_i|^2})=1/(1-2t_1).
\end{align}
So according to Chernoff bound, we have
\begin{align}
&\text{Pr}\left[ {\|\bm{a}\|_2}^2> y^2\right]\leqslant e^{-t_1y^2}\left[     
\text{mean}(e^{t_1|a_i|^2})\right]^N\leqslant e^{-t_1y^2}/(1-2t_1)^N.
\end{align}
let $y^2=4N\log\left(8/\delta\right)$ and $t_1=1/(4N)$. 
It can be verified that
\begin{align}
&\text{Pr}\left[ {\|\bm{a}\|_2}^2>y^2\right]\leqslant \frac{e^{-t_14N\log\left((8/\delta)\right)}}{(1-2t_1)^N} \leqslant \delta/8\left[\frac{1}{1-1/(2N)}\right]^N\leqslant\delta/4
\label{eq:al}
\end{align}
Moreover, we have
\begin{subequations}
\begin{align}
&\text{mean}(e^{-t_2\Delta'_i})=\frac{1+2t_2e^{-(u_{\text{cut}}y)^2(1+2t_2)/2}}{1+2t_2},\\
&\text{mean}\left(\sum_{i=0}^{N-1}\Delta'_i\right)=N\left(2-2e^{-(u_{\text{cut}}y)^2/2}\right)>\frac{N(\delta/4)^{1/N}}{3}.
\end{align}
\end{subequations}
Therefore, according to the Chernoff bound,
\begin{align}
&\text{Pr}\left[ \sum_{i=0}^{N-1}\Delta'_i< \frac{N(\delta/4)^{1/N}}{3}\right]\leqslant\left[\frac{1+2t_2e^{-(u_{\text{cut}}y)^2(1+2t_2)/2}}{1+2t_2}\right]^N e^{t_2\frac{N(\delta/4)^{1/N}}{3}}.
\end{align}
Because $u_{\text{cut}}^2y^2=32(4/\delta)^{1/N}\log\left(12/(\varepsilon_{\text{th}}\delta)\right)\log\left(8/\delta\right)>32\log12\log8$
, when $t_2>1$, we have $2t_2e^{-(u_{\text{cut}}y)^2(1+2t_2)/2}<1$, and
\begin{align}
&\text{Pr}\left[ \sum_{i=0}^{N-1}\Delta'_i< \frac{N(\delta/4)^{1/N}}{3}\right]\leqslant \left[\frac{1+1}{2t_2}\right]^N e^{t_2\frac{N(\delta/4)^{1/N}}{3}}= \left(\frac{e^{\frac{t_2}{3}(\delta/4)^{1/N}}}{t_2}\right)^N.
\end{align}
Let $t_2=3(4/\delta)^{1/N}$, we have
\begin{align}
\text{Pr}\left[ \sum_{i=0}^{N-1}\Delta'_i< N(\delta/4)^{1/N}\right]\leqslant\left[\frac{e}{3}\frac{1}{(4/\delta)^{1/N}}\right]^N\leqslant\delta/4.\label{eq:dt'_ed}
\end{align}
Combining Eq.~\eqref{eq:p_s_cut}, Eq.~\eqref{eq:al}, Eq.~\eqref{eq:dt'_ed} and $y^2=4N\log(8/\delta)$, Lemma~\ref{lm:3} holds true.
\end{proof}

Result.~5 follows directly from Lemma~\ref{lm:2} and Lemma~\ref{lm:3}:
\begin{align}
\text{Pr}\left[p_s\geqslant C_p\text{ and } F\geqslant1-\varepsilon_{th}\right]&\geqslant 1-\text{Pr}\left[p_s< C_p\right]-\text{Pr}\left[F<1-\varepsilon_{th}\right]\notag\\
&\geqslant1-\delta.
\label{eq:pf}
\end{align}
Note that 
\begin{equation}
C_p=\Omega\left(\frac{\delta^{2/N}}{\log\delta^{-1}\log\left(\delta^{-1}\varepsilon_{\text{th}}^{-1}\right)}\right).
\end{equation}
For $p_s'$, the proof of Result.~5 is similar.

\section{trade-off with unitary preparation}\label{sec:sup_trade}

\begin{algorithm} 
\caption{: $f_{\text{trade-off}}(\bm{x},c_0,n_{\text{u}})$  }  
\label{alg:trade}  
\begin{algorithmic}[1]

\STATE \textbf{If} $\bm{x}$ is $2^{n_{\text{u}}}$-dimensional:

\STATE \quad prepare $c_0$ copies of $|\bm{x}\rangle$ with unitary preparation method \textit{in parallel}

\STATE \quad \textbf{Output} $|\bm{x}\rangle^{\otimes c_0}$

\STATE \textbf{Else}: 

\STATE \quad let $\bm{x^{(a)}}\oplus \bm{x^{(b)}}= \bm{x}$ 

\STATE \quad query $f_{\text{para}}(\bm{x^{(a)}},c_0)$ and $f_{\text{para}}(\bm{x^{(b)}},c_0)$ \textit{in parallel}, get return $|\bm{x^{(a)}}\rangle^{\otimes c_a}$  and $|\bm{x^{(b)}}\rangle^{\otimes c_b}$  

\STATE \quad define $c_{\min}=\min\{c_a,c_b\}$

\STATE \quad perform transformation $|\bm{x^{(a)}}\rangle\otimes|\bm{x^{(b)}}\rangle\rightarrow|\bm{x}\rangle$ for $c_{\min}$ times \textit{in parallel}, with $c$ trials success

\STATE \quad\textbf{If} $c=0$:

\STATE \quad\quad go to line 6

\STATE \quad\textbf{Else}:

\STATE \quad\quad \textbf{Output} $|\bm{x}\rangle^{\otimes c}$

\end{algorithmic} 
\end{algorithm} 

As described in Algorithm~\ref{alg:trade}, the trade-off preparation method is similar to Algorithm.~2 in the main text expect that at line $\#2$, unitary preparation is applied when the input state is $2^{n_{\text{u}}}$ dimensional. Algorithm.~2 can be considered as an extreme case of Algorithm~\ref{alg:trade} when $n_u=1$. For larger $n_u$, the circuit depth is higher, but average runtime is lower.

As an example, in Fig.~\ref{fig:nisq}, we consider the preparation of 16-qubit target state. We use case 2 sampling method, and the results are averaged over $100$ random states. When estimating circuit depth and runtime, all controlled-swap gates and control-control swap gates are decomposed into single qubit and CNOT gates. As can be seen, with more available ancillary qubits, we have both lower circuit depth (of single qubit and CNOT gates) and lower runtime.

 \begin{figure} [!t]
\includegraphics[width=0.5\columnwidth]{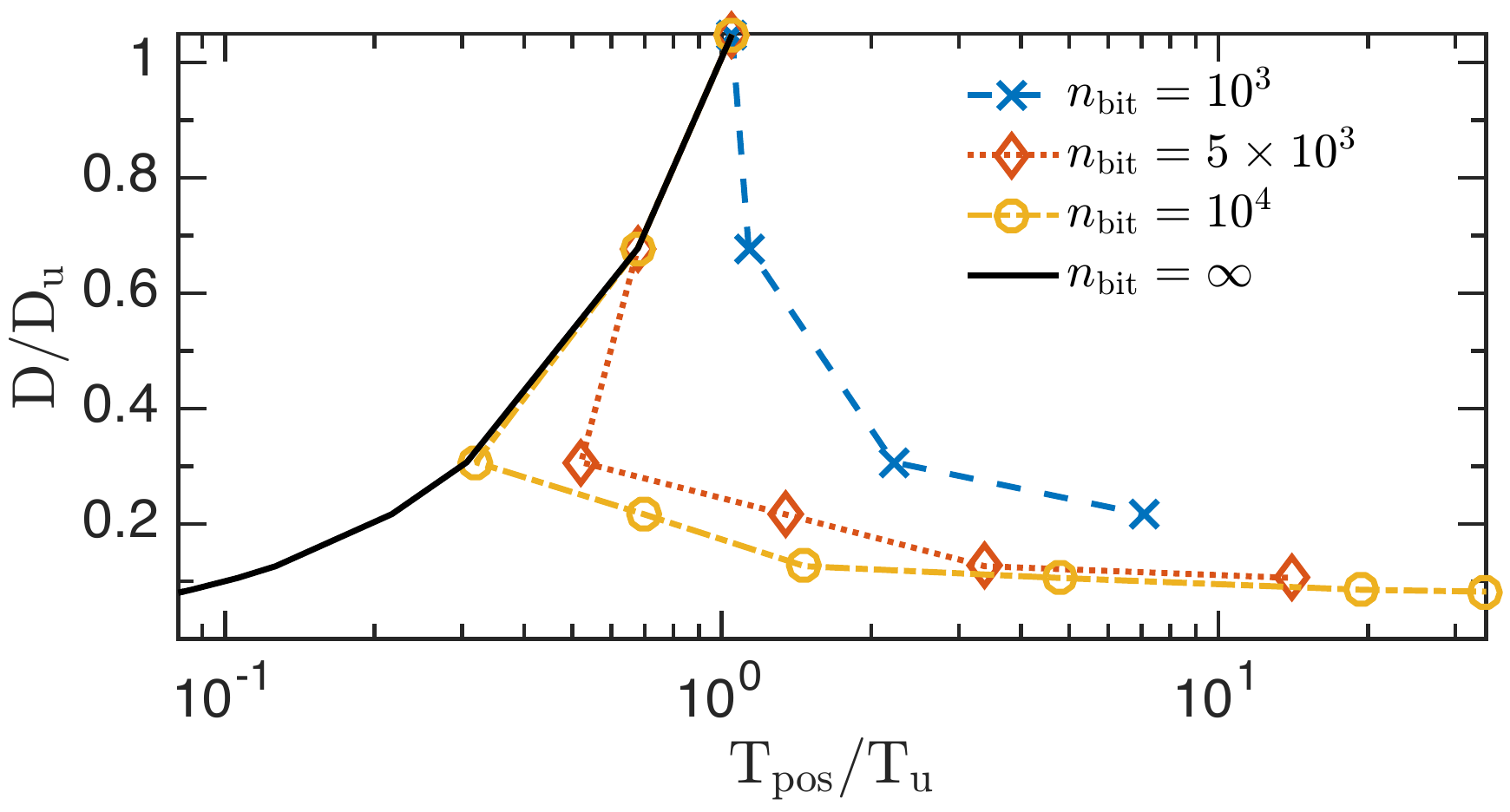}
\caption{Trade-off performance between parallel low-depth preparation and unitary preparation method. $n_\text{bit}$ is the total number of available qubits. $T_u$ and $D_u$ are the leading order of runtime and circuit depth of the unitary preparation method in Ref.~\cite{Plesch.11}. 
\label{fig:nisq}
}
\end{figure}

\newpage
\section{Lower bound of the circuit depth for preparing arbitrary quantum state}\label{sec:bnd}

Our protocols show that the state preparation can be done with $\mathcal O(n^2)$ circuit depth. It remains a question that if there is a lower bound of the circuit depth for state preparation and how close is our protocol to the fundamental limit. In this section, we will address this problem under the following framework:

\begin{enumerate}
 \item Initially, $\bm{v}$ is stored classically and there is no other prior knowledge about the target state;
	\item All operations are k-local; 
	\item Ancillary qubits are allowed, and qubits have all-to-all connections.

	\end{enumerate}

Under the framework above, we have the Result.~6.
The main idea of the proof for Result.~6 is as follow. Firstly, we introduce the concept of light cone, and show that the light cone size required for quantum state preparation is $\mit\Omega(N)$, as there are totally $N$ elements in $\bm{v}$. Secondly, we show that in order to obtain a light cone with $\mit\Omega(N)$, at least $\mit\Omega(n)$ layers of local operations are required. 

We begin with introducing several definitions.

\textit{Qubit connections and light cone.}
At each layer of the quantum circuit, we make grouping of all qubits. 
We denote $\pi_m^{(i)}$ as the $i$th group of qubits at the $m$th layer. Operations are applied only among the qubits in the same group. For example, if there are four qubits ($n_{\text{tot}}=4$), and $\pi_1^{(1)}=\{1,4\},\pi_1^{(2)}=\{2,3\}$, it means that at the first layer, qubits with label $1,4$ are connected to each other, and qubits with label $2,3$ are connected to each other. We also denote the group containing qubit $j$ at the $l$th layer as $\pi'_l(j)$. In the example above, we have $\pi'_1(1)=\{1,4\},\pi'_1(2)=\{2,3\},\pi'_1(3)=\{2,3\}$ and $\pi'_1(4)=\{1,4\}$. Note that at each layer, each qubit belongs only to \textbf{one} group. 

The light cone of qubit $j$ is defined as all qubits having connection to it. More rigorously, for a $L$-layer circuit, the light cone of qubit $j$ is
\begin{align}
&\mathcal S(j)\equiv \mathcal S_1(j),
\end{align}
where
\begin{eqnarray}
\mathcal S_m(j)\equiv \left\{
\begin{array}{ccl}
\pi'_m(j)      &    & m=L \\
 \bigcup_{k\in\mathcal{S}_{m+1}(j)}\pi'_m(k)    &  & 1\leqslant m<L   \label{eq:percep}\\  
\end{array} \right. \ .
\end{eqnarray}
Because the operations are k-local, the light cone size (number of elements of $S(j)$) satisfies $|S(j)|\leqslant k^L$. 

\textit{Quantum operations and classical encoding of $\bm{v}$.} Suppose there are totally $N_{\text{tot}}$ qubits ($N_{\text{tot}}$ is even), the operations at the $m$th layer generally takes the following form
\begin{align}
\mathcal{E}_m=\mathcal{E}_{\pi_m^{(N_{\text{tot}}/2)}}\circ\cdots\mathcal{E}_{\pi_m^{(2)}}\circ\mathcal{E}_{\pi_m^{(1)}}.
\end{align}
where $\mathcal{E}_{\pi_m^{(i)}}$ can be arbitrary quantum operations (with unitary, measurement and post-selection) in the subspace containing qubits in $\pi_m^{(i)}$. The full operation of a $L$-layer quantum circuit is just
\begin{align}
\mathcal{E}=\mathcal{E}_L\circ\cdots\circ\mathcal{E}_2\circ\mathcal{E}_1.\label{eq:el}
\end{align}

For simplicity, we restrict ourself to $\bm{v}\in[0,1]^N$, and the generalization to $\bm{v}\in\mathbb{C}$ is straight forward. As $\bm{v}$ has totally $N$ elements, it takes at least $N$ qubits (or classical bit) to store $\bm{v}$. Without loss of the generality, we assume that $\bm{v}$ is stored in the following form 
\begin{align}
\bigotimes_{i=0}^{N-1}\left(|v_i\rangle\langle v_i|\right)^{\otimes N_{\text{c}}},\label{eq:cls}
\end{align}
where we have allowed multiple copies ($N_{\text{c}}$) of the classical information. We also assume that the $n$-qubit system encoding $\bm{v}$ is initialized to $\rho_{\text{enc}}$, and there is an ancillary qubits system initialized to $\rho_{\text{anc}}$.

The target state takes the form of Eq.~(1) in the main text, so in the most general cases, quantum state preparation processes, $\mathcal E$, should satisfy the following
\begin{align}
\mathcal{E}\left(\rho_{\text{enc}}\otimes\rho_{\text{anc}}\otimes\bigotimes_{i=0}^{N-1}\left(|v_i\rangle\langle v_i|\right)^{\otimes N_{\text{c}}}\right)=\alpha(\bm{v})|\psi(\bm{v})\rangle\langle\psi(\bm{v})|\otimes\rho' \label{eq:prep_gen}
\end{align}
for all $\bm{v}\in[0,1]^N$. Here $\alpha(\bm{v})\in(0,1]$ as post-selections are allowed, and $\rho'$ can be an arbitrary quantum state of the joint system of ancillary qubits and the qubits encoding the classical information. 
Our goal is to find the lower bound of $L$ in Eq.~\eqref{eq:el} for quantum operations satisfying Eq.~\eqref{eq:prep_gen}.  

We first introduce a Lemma relating the light cone to the reduced density matrix (RDM) of the final output states. Lemma.~\ref{lm:lc} can be considered as a generalization of Lemma 2.3 in \cite{Barak.20}.

\begin{lemma}\label{lm:lc}
The RDM of qubit $j$ of the final output state  depends only on the RDM of the input state for the subsystem containing qubits in its light cone $\mathcal{S}(j)$.
\end{lemma}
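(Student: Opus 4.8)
The plan is to prove the statement by backward induction over the $L$ circuit layers, propagating reduced density matrices through the nested light cones $\{j\}=\mathcal S_{L+1}(j)\subseteq\mathcal S_L(j)\subseteq\cdots\subseteq\mathcal S_1(j)=\mathcal S(j)$, where I adopt the convention $\mathcal S_{L+1}(j)=\{j\}$ so that the recursion $\mathcal S_m(j)=\bigcup_{k\in\mathcal S_{m+1}(j)}\pi'_m(k)$ holds for every $m\le L$. Let $\rho^{(m)}$ denote the (subnormalized) state just before layer $m$, so that $\rho^{(1)}$ is the input and $\rho^{(L+1)}$ the output. The invariant I would carry is: the reduced state of $\rho^{(m+1)}$ on $\mathcal S_{m+1}(j)$ is a fixed function of the reduced state of $\rho^{(m)}$ on $\mathcal S_m(j)$. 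Chaining this for $m=1,\dots,L$ then expresses $\tr_{\neg j}\rho^{(L+1)}$ as a function of $\tr_{\overline{\mathcal S(j)}}\rho^{(1)}$, which is exactly the claim. The induction rests on two elementary facts: a quantum operation supported on a region commutes with the partial trace over any disjoint region, and operations on disjoint regions act independently.

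The combinatorial heart is to show that at every layer $m$ each group $\pi_m^{(i)}$ is either contained in $\mathcal S_m(j)$ or disjoint from it. This holds because $\mathcal S_m(j)$ is by definition a union of whole groups $\pi'_m(k)$ (for $k\in\mathcal S_{m+1}(j)$), while the groups at a fixed layer partition the qubits: a group meeting $\mathcal S_{m+1}(j)$ is absorbed entirely, and a group missing it is disjoint from the union. Hence the layer map factorizes as $\mathcal E_m=\mathcal E^{\mathrm{in}}_m\otimes\mathcal E^{\mathrm{out}}_m$ across the cut $(\mathcal S_m(j),\overline{\mathcal S_m(j)})$, with $\mathcal E^{\mathrm{in}}_m$ supported on the cone. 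When $\mathcal E^{\mathrm{out}}_m$ is trace-preserving the induction step is immediate: tracing out $\overline{\mathcal S_m(j)}$ removes $\mathcal E^{\mathrm{out}}_m$ and commutes past $\mathcal E^{\mathrm{in}}_m$, giving $\tr_{\overline{\mathcal S_{m+1}(j)}}\rho^{(m+1)}=\tr_{\mathcal S_m(j)\setminus\mathcal S_{m+1}(j)}\,\mathcal E^{\mathrm{in}}_m\big(\tr_{\overline{\mathcal S_m(j)}}\rho^{(m)}\big)$, a function of the cone's reduced state alone. This already settles the lemma for unitaries and for measurements averaged over outcomes.

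The main obstacle, and where I expect the real difficulty, is genuine post-selection, i.e.\ a non-trace-preserving $\mathcal E^{\mathrm{out}}_m$. Then tracing out $\overline{\mathcal S_m(j)}$ no longer discards the outside map: it inserts an effect operator $E^{\mathrm{out}}_m=\sum_k K_k^\dagger K_k\le\mathbb 1$, and $\tr_{\overline{\mathcal S_m(j)}}\big[(\mathbb 1\otimes E^{\mathrm{out}}_m)\rho^{(m)}\big]$ is sensitive to the cross-cut correlations of $\rho^{(m)}$, not merely to its marginal on the cone. Physically, conditioning can steer the cone from outside: a qubit that belonged to $\mathcal S_{m'}(j)$ at an earlier layer but has since dropped out of the cone may remain entangled with it, and post-selecting on that qubit alters $j$'s reduced state exactly as measuring one half of a Bell pair fixes the other. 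To control this I would first try to reduce to the trace-preserving case by the principle of deferred measurement: dilate every operation to a unitary on its group plus fresh ancillas, pushing all post-selections to a single terminal projection. The unitary bulk then obeys the clean light-cone lemma above (as in Lemma~2.3 of \cite{Barak.20}), and the remaining task is to argue that the terminal projection acts only within $j$'s cone as far as $j$'s marginal can detect. Establishing this localization, and pinning down precisely which initial-qubit correlations survive the conditioning, is the delicate step of the proof and the one I would scrutinize most carefully.
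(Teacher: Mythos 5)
Your first two paragraphs are essentially the paper's own proof, carried out more carefully: the paper argues by exactly the same backward induction through the nested cones (``the RDM of $\mathcal S_m(j)$ at the $m$th layer depends only on the RDM of $\mathcal S_{m-1}(j)$ at the $(m-1)$th layer''), and your factorization $\mathcal E_m=\mathcal E_m^{\mathrm{in}}\otimes\mathcal E_m^{\mathrm{out}}$ across the cut $(\mathcal S_m(j),\overline{\mathcal S_m(j)})$, justified by the observation that $\mathcal S_m(j)$ is a union of whole groups of the layer-$m$ partition, is precisely the detail the paper leaves implicit. For trace-preserving layers (unitaries and non-selective measurements) your induction is complete and coincides with the paper's argument.

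The obstacle you flag in your last paragraph, however, is not a removable technicality: the lemma as stated is false once post-selection is allowed, so the deferred-measurement route you sketch cannot be completed by you or by anyone. Concretely, take three qubits $j$, $a$, $d$ with input $\ket{0}_j\ket{0}_a\ket{\psi}_d$. Layer $1$ acts on the group $\{j,a\}$ and prepares the Bell pair $(\ket{00}+\ket{11})/\sqrt{2}$; layer $2$ acts on the group $\{d,a\}$ and post-selects onto $\ket{\Phi^+}=(\ket{00}+\ket{11})/\sqrt{2}$. By the paper's definition, $\mathcal S_2(j)=\pi'_2(j)=\{j\}$ and $\mathcal S(j)=\mathcal S_1(j)=\pi'_1(j)=\{j,a\}$, so $d\notin\mathcal S(j)$ and the input RDM on $\mathcal S(j)$ is $\ket{00}\bra{00}$, independent of $\ket{\psi}$; yet the post-selected output RDM of qubit $j$ is $\ket{\psi}\bra{\psi}$ --- this is teleportation with the Pauli correction avoided by post-selection, exactly the steering mechanism you describe. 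Since the paper's framework explicitly admits post-selection (the operations $\mathcal E_{\pi_m^{(i)}}$ may post-select, and the preparation condition carries the factor $\alpha(\bm{v})$ for precisely that reason), the paper's three-sentence induction is invalid in the claimed generality: it silently assumes the operations outside the cone are trace preserving, which is the same restriction under which your argument closes. The honest repair is to restrict the lemma (and hence Result~\ref{th:bnd}) to non-selective operations; with genuine post-selection one must replace causal cones by connected components of the full interaction graph, and those can reach size $N$ in $O(1)$ depth (gate teleportation with post-selected Bell measurements compresses any circuit to constant depth), so the counting $|\mathcal S(j)|\leqslant k^L$ that drives the lower bound breaks down. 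In short: your proof is right exactly where the paper's is right, and the step you single out as the one to scrutinize is a genuine error in the paper, not a gap you failed to close.
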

\begin{proof}
The RDM of the final output state for qubit $j$ depends only on the RDM of $\mathcal S_L(j)$ at the $(L-1)$th layer.  

Similarly, the RDM of $\mathcal S_m(j)$ at the $m$th layer depends only on the RDM of $\mathcal S_{m-1}(j)$ at the $(m-1)$th layer. Therefore, the RDM of qubit $j$ of the final output state depends only on the RDM of total input state of $\mathcal S_1(j)=\mathcal S(j)$, i.e. the light cone of qubit $j$.

\end{proof}
The proof of Result.~6 follows directly from lemma.~\ref{lm:lc} as follow:

\begin{proof}
According to Eq.~\eqref{eq:prep_gen}, the final state of the encoding system and the RDM of each qubit in it depends on at least one copy of the state $\bigotimes_{i=0}^{N-1}|v_i\rangle\langle v_i|$.

According to Lemma.~\ref{lm:lc}, for qubits in the encoding system, the light cone $\mathcal S(j)$ contains at least $N$ qubits, i.e. $|\mathcal S(j)|\geqslant 2^n$. Suppose the elementary quantum operations are $k$-local, for $L$-layer quantum circuit, 
we have  $|S(j)|\leqslant k^L$. Therefore, in order to obtain an operation satisfying Eq.~\eqref{eq:prep_gen}, the circuit depth is lower bounded by $L\geqslant \frac{\log2}{\log k}n=O(n)$. 

\end{proof}

\end{document}